\synctex=1
\documentclass{elsarticle}

\makeatletter
\renewcommand\appendix{\par
  \setcounter{section}{0}%
  \setcounter{subsection}{0}%
  \setcounter{equation}{0}
  \gdef\thefigure{\@Alph\c@section.\arabic{figure}}%
  \gdef\thetable{\@Alph\c@section.\arabic{table}}%
  \gdef\thesection{\@Alph\c@section}%
  \@addtoreset{equation}{section}%
  \gdef\theequation{\@Alph\c@section.\arabic{equation}}%
  \addtocontents{toc}{\string\let\string\numberline\string\tmptocnumberline}{}{}
}
\makeatother

\usepackage{hyperref}
\usepackage{amsthm,amsmath,amssymb}
\usepackage{tikz}
\usepackage{pgfplots}
\pgfplotsset{compat=1.13} \usepackage{lastpage}
\usepackage{acronym}
\usepackage[capitalize]{cleveref}
\usepackage[mode=buildnew]{standalone}\usepackage{setspace}
\usepackage{xpatch}

\newacro{TpI}{tests per individual}

\newtheoremstyle{theorem}{}{}{\itshape}{}{\bf}{.}{ }{}%
\newtheoremstyle{example}{}{}{} {}{\bf}{.}{ }{}%

\theoremstyle{theorem}
\newtheorem{theorem}{Theorem}[section]

\theoremstyle{example}
\newtheorem{example}[theorem]{Example}
\newtheorem{remark}[theorem]{Remark}

\theoremstyle{theorem}
\newtheorem{lemma}[theorem]{Lemma}
\newtheorem{definition}[theorem]{Definition}
\newtheorem{corollary}[theorem]{Corollary}

\crefname{equation}{Equation}{Equations}
\Crefname{equation}{Equation}{Equations}
\crefformat{equation}{#2(#1)#3}
\crefrangeformat{equation}{#3(#1)#4--#5(#2)#6}
\crefmultiformat{equation}{#2(#1)#3}{ and~#2(#1)#3}{, #2(#1)#3}{ and~#2(#1)#3}
\Crefformat{equation}{#2(#1)#3}
\Crefrangeformat{equation}{#3(#1)#4--#5(#2)#6}
\Crefmultiformat{equation}{#2(#1)#3}{ and~#2(#1)#3}{, #2(#1)#3}{ and~#2(#1)#3}
\crefname{figure}{Figure}{Figures}
\Crefname{figure}{Figure}{Figures}
\crefname{appendix}{Appendix}{Appendices}
\Crefname{appendix}{Appendix}{Appendices}

\AtBeginEnvironment{appendix}{
  \crefalias{section}{appendix}
  \crefalias{subsection}{appendix}
}

\def\dfink{D^{(\mathbf{X}, \rho)}(K)}

\DeclareMathOperator*{\argmin}{arg\,min}

\pgfkeys{/pgf/number format/.cd,1000 sep={\,}}

\pgfplotsset{
  DRaxis/.style={
    width=.95\textwidth,
    height=11cm,
    xlabel={Tests per $100\,000$ individuals},
    ylabel={Cost per individual},
    ylabel style={
                },
    tick label style={font=\small},
    grid=major, 
    legend cell align=left,
    legend style={
      legend pos=north east, font=\small
    },
  },
  DRaxisTpI/.style={
    xlabel={Tests in \ac{TpI}},
  },
  DRplot/.style={
    no markers,
    line width=1pt,
  },
  DRlabeled/.style={
    mark=x,
    mark options={fill=blue,scale=2},
    line width=1pt,
    nodes near coords,     visualization depends on={value \thisrow{anchor}\as\anchor},
    every node near coord/.append style={anchor=\anchor,font=\footnotesize},
        point meta=explicit symbolic,   },
  table/DRtableTpI/.style={
    x=R,
    y=D,
  },
  table/DRtable/.style={
    DRtableTpI,
    x expr=\thisrow{R}*100000,
  },
  table/DRtableTpI_label/.style={
    DRtableTpI,
    meta=label,
  },
  table/DRtable_label/.style={
    DRtable,
    meta=label,
  },
}

\newcommand{\ourtitle}{
Modelling the Utility of Group Testing for Public Health Surveillance
}

\title{\ourtitle}

\author[1]{G\"unther~Koliander\corref{cor1}}
\ead{gkoliander@kfs.oeaw.ac.at}
\author[2]{Georg Pichler}
\ead{georg.pichler@ieee.org}

\cortext[cor1]{Corresponding author}

\sloppy
\allowdisplaybreaks

\begin{document}

\affiliation[1]{organization={Acoustics Research Institute, 
Austrian Academy of Sciences},
addressline={Wohllebengasse~12-14},
postcode={1040},
city={Vienna},
country={Austria}}

\affiliation[2]{organization={Institute of Telecommunications, TU Wien},
addressline={Gußhausstraße~25/E389},
postcode={1040},
city={Vienna},
country={Austria}}

\begin{abstract}
  In epidemic or pandemic situations, resources for testing the infection status of individuals may be scarce. 
Although group testing can help to significantly increase testing capabilities, the (repeated) testing of entire populations can exceed the resources of any country.
We thus propose an extension of the theory of group testing that takes into account the fact that definitely specifying the infection status of each individual is impossible. 
Our theory builds on assigning to each individual an infection status (healthy/infected), as well as an associated cost function for erroneous assignments. 
This cost function is versatile, e.g., it could take into account that false negative assignments are worse than false positive assignments and that false assignments in critical areas, such as health care workers, are more severe than in the general population.
Based on this model, we study the optimal use of a limited number of tests to minimize the expected cost.
More specifically, we utilize information-theoretic methods to give a lower bound on the expected cost and describe simple strategies that can significantly reduce the expected cost over currently known strategies.
A detailed example is provided to illustrate our theory.
\end{abstract}
\begin{keyword}
group testing \sep public health surveillance \sep source coding \sep rate-distortion theory
\end{keyword}

\maketitle

\section{Introduction}

The current pandemic revitalized research on group testing, a methodology to reduce the number of required tests to screen a large population for a certain disease.
The increased testing efficiency results from jointly testing groups, instead of subjecting each individual to a test.

Specifically, we consider the scenario of \textit{probabilistic} group testing as first described by \citet{dorfman43}, where every individual has a certain (known) probability to be infected. 
Probabilistic group testing
is in contrast to combinatorial group testing \cite{Du2000Combinatorial,lee2019saffron} where one assumes a fixed known number of infected individuals. 
Furthermore, we assume that while the number of tests is limited, all tests are perfectly accurate.
Although the case of imprecise tests also leads to interesting resource allocation problems \cite{ely2020optimal}, we will not consider it in this work.
Thus, if none of the individuals in a tested group are infected, the test will yield a negative result with certainty and one test was sufficient to obtain a definite result for several individuals. 
If, however, the test is positive, one cannot say which individuals in the group are infected and further tests have to be conducted.

Formally, a \textit{testing strategy} is a deterministic algorithm, describing which groups of individuals are pooled and jointly subjected to the test.
We assume that the choice of the next group can depend on the results of previously conducted tests.
This \textit{adaptive} group testing, is in contrast to non-adaptive group testing (for a recent survey see \citet{aldridge2019group}), where the testing strategy is fixed and all tests can thus be performed in parallel.

\subsection*{Previous work}
The theoretical work on adaptive, probabilistic group testing started with \citet{dorfman43} who considered the simple strategy of testing groups of $M$ individuals and subsequently, if the pooled test is positive, testing each individual in the group separately.
Although this strategy is far from optimal, it is easy to implement and can significantly increase the testing capabilities, compared to testing every sample individually.
Further research in this direction was focused on finding new, more efficient and practical testing strategies, e.g., nested testing \cite{sobel1959group}, binary splitting \cite{hwang1972method}, and array testing \cite{phatarfod1994use}.
The binary splitting algorithms by \citet{hwang1972method} perform close to the theoretic optimum, within a factor of $1.11$ of an information-theoretic lower bound, as shown by \citet{aldridge2019rates}.
However, such a binary splitting technique requires extensive bookkeeping and many tests need to be performed sequentially, and cannot be conducted in parallel. 
Simpler, two-stage procedures are investigated by \citet{berger2002asymptotic}.
The close relation of adaptive, probabilistic group testing to variable length source coding is well-detailed by~\citet{wolf1985born}.

Due to the current pandemic, several works rediscover slight variations of these results and argue for the use of group testing \cite{gollier2020group,eberhardt2020multi}.
Also, the practical implementation of group testing has seen a new surge of research, in particular, questioning the practical use in testing for SARS-CoV-2.
Here, however, the focus was on the classical Dorfman testing strategy \cite{abdalhamid2020assessment,hogan2020sample,yelinevaluation} and only few works considered more elaborate non-adaptive testing strategies \cite{shental2020efficient}.

Hardly any  works go beyond the assumptions of perfect sensitivity and independence of the infection status of tested individuals.
However, \citet{pilcher2020group} takes the dilution effect into account, i.e., reduced sensitivity of tests for large groups,  
and \citet{deckert2020simulation} performed a simulation study that found benefits of group testing if there is positive correlation of infection status between individuals in the same group.

The common ground of all works above is that they focus on identifying exactly, which individuals are infected.
This is a valid strategy and clearly the best outcome. 
However, there can be situations where not sufficiently many tests are available to subject all potentially infected individuals to a test. 
This is particularly the case when many individuals are (potentially) infected and testing resources are scarce.
Here, a strategy needs to be found that uses these limited resources effectively.
When insufficient tests for the entire population are available, it is unavoidable that some healthy individuals will be treated as infected and/or some infected individuals treated as healthy. For the design of a testing strategy, it is important to note, that these two events are not equally harmful in general. It might be considerably worse to have an undetected infection (false negative) present, than to treat one healthy individual as infected (false positive).
A suitable balance has to be found and sensitive questions like ``How many false positives are we willing to accept to prevent one false negative?'' need to be answered to do so.
Here, assuming that quantitative answers to these questions can be given, we propose a framework for designing and evaluating group testing strategies.
Mathematically, the resulting problem is one of rate-distortion theory, a branch of information theory, established in the 1950s in a seminal paper by~\citet{Shannon1959Coding}.
This connection allows us to formulate bounds on the performance of any group testing strategy.
Although the problem of (combinatorial) group testing was extensively explored by information-theorists (e.g., in Ch.~24--29 in \citet{Aydinian2013Information} and by \citet{aldridge2019group}), this rate-distortion viewpoint has apparently not been considered so far. 
Thus, fundamental bounds are missing even for elementary scenarios.

Only recently, researchers suggested that ``the  scarcity  of tests obviously means that it is better to use a test to detect the virus in another untested group than to try to discover who is infected in a positive group'' \cite{gollier2020group} which is a first step into the direction of the rigorous theory developed here.
Finally, the scenario discussed in~\cite{jonnerby2020maximising} is closest to our ideas but considers only a single fixed testing strategy. 
We will use it in our theory for comparison and as a starting point for some more evolved testing strategies.

\subsection*{Contributions}
We present a rigorous mathematical framework for evaluating the cost incurred by false positive and false negative assignments under a given group testing strategy.
An ultimate lower bound on the expected cost that cannot be surpassed by any testing strategy is derived and compared to existing and novel testing strategies. 
We consider two basic scenarios in more detail:
First, a toy example where all individuals are equally likely to be infected and where wrong assignments incur the same cost for each individual;
and, subsequently, a division of the population into subpopulations that have different probabilities of being infected (e.g., individuals showing symptoms are more likely to be infected than individuals without symptoms) and/or different costs associated to false assignments (e.g., misclassified health care workers result in a higher cost).

Our work is  focused on a simple model that requires as few parameters as possible.
Thus, it  does not capture several aspects that might be relevant in practical scenarios, such as dependence of the infection status between individuals, compliance of individuals with their assigned health status, or imperfect test results, nor does it incorporate testing strategies into a larger disease model.
Nevertheless, basic questions that were so far answered by ``common sense,'' can be discussed on a sound mathematical basis. 
For example, optimal testing priorities can be shown to depend heavily on the specific scenario and subjecting only symptomatic individuals to a test is often a suboptimal decision. 

The rest of this article is organized as follows. 
In \cref{sec_problem}, we formulate the problem, give a mathematical definition of testing strategies, and introduce the expected cost associated with a testing strategy, as well as the minimal expected cost.
Our main theoretical results are presented in \cref{sec:results}. 
We establish fundamental lower bounds on the minimal expected cost and calculate the expected cost of various simple testing strategies. 
A first simple example is given to illustrate the bound and the strategies.
\Cref{sec:case} showcases a more complicated example.
It illustrates how to allocate limited testing resources to obtain significant improvements using simple testing strategies. 
In \cref{sec:discussion}, we discuss our results and their limitations. 
Finally, in \cref{sec:itdetails}, we provide the information-theoretic statements that underlie our main results. Detailed proofs are relegated to a technical appendix.

\section{Problem formulation}\label{sec_problem}

We assume that we have a sequence of individuals and the infection status of the $n$-th individual is given by a binary random variable $X_n$ on $\Omega := \{0,1\}$, where $X_n=1$ corresponds to being infected and $X_n=0$ to being healthy. 
All $X_n$ are assumed to be independent but not necessarily identically distributed. 
Thus, each  $X_n$ is a Bernoulli($p_n$) random variable with possibly different probability $\Pr[X_n=1] = p_n \in (0, 1)$. 
The second ingredient we need is a \textit{cost function} $\rho_n(x_n, y_n)$ that models the cost of wrong assignments, i.e., assigning an estimated infection status $y_n$ to the $n$-th individual with actual infection status $x_n$.
In contrast to communication scenarios where $0$ and $1$ are usually interchangeable, the cost $\rho_n(0,1)$ of false positive assignment (i.e., a healthy individual is wrongly assigned an infected status) 
is not necessarily the same as the cost $\rho_n(1,0)$ of a false negative assignment (i.e., an infected individual is wrongly assigned a healthy status). 
Thus, we define
\begin{align}
    \rho_n(0,1) = b_n\,, && \text{ and} &&
    \rho_n(1,0) = c_n\,,
    \label{eq:false}
\end{align}
where $b_n, c_n > 0$.
The cost of correct assignments is set to zero, i.e., $\rho_n(0,0) = \rho_n(1,1) = 0$ and the total
cost $\rho\colon \Omega^N \times \Omega^N \to \mathbb{R}$ is given by summation $(\mathbf{x},\mathbf{y}) \mapsto \rho(\mathbf{x}, \mathbf{y}) = \sum_{n=1}^{N} \rho_n(x_n, y_n)$.

We now turn to the mathematical description of  testing strategies.
A \textit{testing strategy} for $N$ individuals consists of a test procedure and a decision procedure.
An 
$(N,K)$-\textit{test procedure}
is given by 
$\boldsymbol{\eta} = (\eta_1, \eta_2, \dots, \eta_K)$ where $\eta_1 \in \mathcal{P}\big(\{1,2,\dots,N\}\big)$ and for $k > 1$, $\eta_k\colon \Omega^{k-1} \to  \mathcal{P}\big(\{1,2,\dots,N\}\big)$, where $\mathcal{P}\big(\{1,2,\dots,N\}\big)$ denotes the collection of all subsets of $\{1,2,\dots,N\}$. 
Here, the set $\eta_1$ indicates the group used for the first test and the set-valued function $\eta_k$ indicates the group used for the $k$-th test, given the results of the previous $k-1$ tests. 
The corresponding \emph{test function} $\boldsymbol{\vartheta}\colon \Omega^N \to \Omega^K$ is given by $\vartheta_1(\mathbf{x}) = \max\{x_n | n \in \eta_1\}$ and for $k > 1$, we have $\vartheta_k(\mathbf{x}) = \max\big\{x_n \big| n \in \eta_{k}\big(\vartheta_1(\mathbf{x}), \vartheta_2(\mathbf{x}), \dots, \vartheta_{k-1}(\mathbf{x})\big)\big\}$.
Thus, the $k$-th component $\vartheta_k$  corresponds to the result of the $k$-th test.

A
$(K,N)$-\textit{decision procedure}
is a mapping 
$\kappa\colon \Omega^K \to \Omega^N$
that assigns, based on the outcome of $K$ tests, a status (infected/healthy) to all $N$ individuals. 

The concatenation of $\boldsymbol{\vartheta}$ and $\kappa$ maps the true status $\mathbf{X} := (X_1, \dots, X_N)$ of all $N$ individuals to an estimated status $(Y_1, \dots, Y_N) = \kappa \big(\boldsymbol{\vartheta} (X_1, \dots, X_N) \big)$ of these individuals using $K$ group tests.
In total, this corresponds to $R=K/N$ \ac{TpI} which is referred to as the \textit{rate} of the testing strategy.
If $R < 1$, which is the regime we are interested in, there is a positive probability that the tests will not enable a perfect identification of all  infected  individuals.
Note that such an estimate ($0$: not infected, $1$: infected) has to be given for all $N$ individuals. We do not allow for individuals to be ``skipped,'' which would correspond to a ternary output alphabet.

To assess the average cost of the wrong assignments for given test and decision procedures, we use the \emph{expected cost per individual}, defined as the expected value
\begin{equation}
    D_{\text{test}}^{(\mathbf{X}, \rho)}(\boldsymbol{\vartheta}, \kappa) = \mathbb{E}\bigg[\frac{1}{N}\sum_{n=1}^N \rho_n(X_n, Y_n)\bigg].
\end{equation}
Because there are only finitely many possible  choices for $\boldsymbol{\vartheta}$ and $\kappa$, we can define the minimum
\begin{equation}
\label{eq:defdfink}
  \dfink = \min_{\boldsymbol{\vartheta}, \kappa} D_{\text{test}}^{(\mathbf{X}, \rho)}(\boldsymbol{\vartheta}, \kappa) ,
\end{equation}
where $\boldsymbol{\vartheta}$ and $\kappa$ range over all $(N,K)$-test and $(K,N)$-decision procedures, respectively.
The quantity $\dfink$ specifies the minimal cost, measured by $\rho$, that can be achieved by using $K$ tests for the $N$ individuals with random infection status $\mathbf X$.

\section{Results}
\label{sec:results}
Calculating $\dfink$ directly from \cref{eq:defdfink} is computationally infeasible, unless $N$ and $K$ are very small.
Nevertheless, we can use information-theoretic ideas to provide bounds.
These bounds are based on the idea of keeping the rate $R = K/N$ fixed, while letting $N$ approach infinity.
Our first result is a lower bound, that holds if all individuals share the same parameters.
Here, and in the remainder of the paper, we use the symbol $H_2(p) = -p\log p - (1-p)\log(1-p)$ for the binary entropy function, $\log(\,\cdot\,)$ denotes the logarithm to base $2$, and we adopt the convention that ``$0\cdot\log 0 = 0$.'' The proofs of the results in this \lcnamecref{sec:results} are presented in \cref{sec:itdetails} and in the appendices.

The following \lcnamecref{thm:RD_function} presents a lower bound on $\dfink$ for the case when infection is equally likely and independent across the entire population. It follows immediately from the more general \cref{th:partitions}, which will be stated later in this \lcnamecref{sec:results}. 
\begin{theorem}
  \label{thm:RD_function}
  Let $\mathbf{X} := (X_1, \dots, X_N)$ be $N$ independent and identically distributed Bernoulli($p$) random variables describing the infection status of a given population. 
  The cost of wrong assignments is given by \cref{eq:false} with $b_n = b$ and $c_n = c$ for all $n$.
  Furthermore, set $a := c/b > 0$.
  For $v\in [0, v_0)$ define
  \begin{align}
      \label{eq:defdviidsup}
      \bar D(p, a, v)
      &= p \bigg(\frac{v}{1-v}-\frac{av^a}{1-v^a} \bigg)
            + \frac{a}{1-v^a} - \frac{a+v^{a+1}}{1- v^{a+1}} \\
      \label{eq:defrviidsup}
      \bar R(p, a, v)
      &= \bar D(p, a, v) \log v + H_2(p) - \log \bigg(\frac{1- v^{a+1}}{1-v^a}\bigg)
            + p \log \bigg(\frac{1-v}{1-v^a} \bigg)
  \end{align}
  and for $v \ge v_0$, define $\bar D(p,a,v) = \min \big\{1-p,a p \big\}$ and $\bar R(p,a,v) = 0$.
  Here, $v_0$ is the smallest solution $v > 0$ of the equation
  \begin{align}
      (p v^{a+1} + 1 - p - v)(p v^{-a-1} + 1 - p - v^{-1}) = 0 .
      \label{eq:solv0}
  \end{align}
  Then there cannot exist a testing strategy that uses fewer than $\bar R(p, a, v)$ \ac{TpI} (i.e., $\bar R(p, a, v) N$ tests in total) and achieves an expected cost less than $b \bar D(p, a, v)$, i.e., 
  $\dfink \geq  b \bar D(p, a, v)$ for all $K\leq \bar R(p, a, v) N$.
\end{theorem}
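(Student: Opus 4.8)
The plan is to read $\dfink/b$ as the least per-individual distortion attainable at rate $R=K/N$ for a memoryless $\mathrm{Bernoulli}(p)$ source under the asymmetric per-letter distortion $\delta$ with $\delta(0,1)=1$, $\delta(1,0)=a$ and $\delta(0,0)=\delta(1,1)=0$, and then to bound that quantity from below by a Lagrangian (weak-duality) estimate of the single-letter rate--distortion function $R_p(D)=\min\{I(X;Y)\colon \mathbb{E}[\delta(X,Y)]\le D\}$ of this source.

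\textbf{Step 1 (reduction to a single-letter bound).} Fix any $(N,K)$-test procedure $\boldsymbol{\vartheta}$ and $(K,N)$-decision procedure $\kappa$, and put $\mathbf{Y}=\kappa(\boldsymbol{\vartheta}(\mathbf{X}))$. Since $\boldsymbol{\vartheta}(\mathbf{X})$ takes values in $\Omega^{K}$ and $\mathbf{X}\to\boldsymbol{\vartheta}(\mathbf{X})\to\mathbf{Y}$ is a Markov chain, I would first write $K\ge H(\boldsymbol{\vartheta}(\mathbf{X}))\ge I(\mathbf{X};\boldsymbol{\vartheta}(\mathbf{X}))\ge I(\mathbf{X};\mathbf{Y})$ --- a bound that holds for any test procedure, adaptive or not. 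Independence of the $X_n$ then yields the single-letterization $I(\mathbf{X};\mathbf{Y})\ge\sum_{n=1}^{N}I(X_n;Y_n)$ (using $H(\mathbf{X})=\sum_n H(X_n)$, subadditivity of conditional entropy, and ``conditioning reduces entropy''), and by definition of $R_p$ together with its convexity, $\sum_n I(X_n;Y_n)\ge\sum_n R_p(D_n)\ge N\,R_p\!\big(\tfrac1N\sum_n D_n\big)$ with $D_n=\mathbb{E}[\delta(X_n,Y_n)]$. Because $\tfrac1N\sum_n D_n=\tfrac1b D_{\text{test}}^{(\mathbf{X},\rho)}(\boldsymbol{\vartheta},\kappa)$, minimizing over strategies gives $K/N\ge R_p(\dfink/b)$.

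\textbf{Step 2 (a computable lower bound on $R_p$).} For $s\ge 0$ and a distribution $Q$ on $\{0,1\}$ put $\Phi_s(Q)=-\sum_{x}P_X(x)\log\sum_y Q(y)2^{-s\delta(x,y)}$. For every test channel one has the identity $I(X;Y)+s\,\mathbb{E}[\delta(X,Y)]-\Phi_s(P_Y)=D(P_{Y|X}\,\|\,\widehat{P}_{Y|X}\,|\,P_X)\ge 0$, where $\widehat{P}_{Y|X}(y\,|\,x)\propto P_Y(y)2^{-s\delta(x,y)}$; relaxing $\mathbb{E}[\delta]\le D$ then gives $R_p(D)\ge -sD+\min_Q\Phi_s(Q)$. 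With $v=2^{-s}$ and $Q=(1-q,q)$, the map $q\mapsto\Phi_s(Q)=-(1-p)\log(1-q+qv)-p\log((1-q)v^{a}+q)$ is strictly convex with stationary point $q(p,a,v)=\frac{p-v^{a}+(1-p)v^{a+1}}{(1-v)(1-v^{a})}$, and I would verify that $q(p,a,v)\in(0,1)$ precisely when $v\in(0,v_0)$, the endpoint being where $q$ reaches $0$ or $1$ --- the two factors in \cref{eq:solv0} --- which is where $R_p$ hits $0$. Substituting this minimizer and using the identities $1-q+qv=\frac{(1-p)(1-v^{a+1})}{1-v^{a}}$ and $(1-q)v^{a}+q=\frac{p(1-v^{a+1})}{1-v}$, the estimate collapses to $R_p(D)\ge\big(D-\bar D(p,a,v)\big)\log v+\bar R(p,a,v)$.

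\textbf{Step 3 (conclusion) and the main obstacle.} For $v\in(0,v_0)$ and any $K\le\bar R(p,a,v)N$, chaining the two estimates at $D=\dfink/b$ gives $\bar R(p,a,v)\ge R_p(\dfink/b)\ge(\dfink/b-\bar D(p,a,v))\log v+\bar R(p,a,v)$, hence $(\dfink/b-\bar D(p,a,v))\log v\le 0$; since $v=1$ solves \cref{eq:solv0} we have $v_0\le 1$, so $v<v_0\le 1$ forces $\log v<0$ and therefore $\dfink\ge b\,\bar D(p,a,v)$. The case $v=0$ is the trivial statement $\dfink\ge 0$, and for $v\ge v_0$ the claim only concerns $K=0$, where the best (constant-output) strategy already gives $\dfink\ge b\min\{1-p,ap\}$. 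Steps 1 and 3 are routine; the main obstacle is Step 2 --- confirming that $q(p,a,v)$ is the interior minimizer of $\Phi_s$ exactly on $(0,v_0)$ with the degeneracy at $v_0$ matching \cref{eq:solv0}, that $\bar D(p,a,v)$ sweeps $[0,\min\{1-p,ap\})$ continuously as $v$ ranges over $[0,v_0)$, and performing the (elementary but lengthy) algebra that reduces the Lagrangian estimate to the closed forms \cref{eq:defdviidsup,eq:defrviidsup}. Alternatively, once the general result \cref{th:partitions} is available, this theorem is simply its one-cell special case.
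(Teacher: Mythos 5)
Your proposal is correct and follows essentially the same route as the paper: an information-theoretic converse ($K\ge I(\mathbf X;\mathbf Y)$, which is the content of \cref{lem:rdgtrd}), single-letterization using independence of the $X_n$, and evaluation of the resulting single-letter bound at the slope parameter $v=2^{-s}$ with exactly the same output-distribution minimizer $q_v$, the same algebraic identities, and the same $v_0$ case analysis as in \cref{app:proofiid_rd_final}. The only genuine difference is economy of means: since only a lower bound is needed, your weak-duality (Lagrangian) estimate, proved from the nonnegativity of the KL-divergence identity, lets you bypass the exact parametric characterization of the information distortion-rate function (\cref{cor:parrdfunc}, in particular the distortion formula \cref{eq:defdvprelim}) as well as the decomposition and KKT machinery of \cref{lem:dr_function} and \cref{th:subpop2}, which the paper needs anyway for the heterogeneous \cref{th:partitions}; conversely, the paper's route yields the exact curve $D_{\textrm{I}}^{(X,\rho)}(R)$ rather than a one-sided bound. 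Your treatment of the boundary cases ($v\ge v_0$ reduces to $K=0$ and the constant-output strategy; $v=0$ is trivial) is a legitimate shortcut for what the paper instead absorbs into the $q_v\le 0$ and $q_v\ge 1$ analysis, and the remaining items you flag as "to verify" (that $q_v\in(0,1)$ exactly on $(0,v_0)$, and the closed-form algebra) are precisely the computations carried out in the paper's appendix, so no gap remains.
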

We emphasize that, in contrast to similar results in classical information theory, the lower bound in \cref{thm:RD_function} cannot always be achieved arbitrarily closely for increasing $N$.
For example, for $p = \frac{1}{2}(3-\sqrt{5}) \approx 0.381$ and $v=0$, we obtain $\bar D(p, a, v)=0$ and $\bar R(p, a, v)=H_2(p)\approx 0.959$, although it is known \citep{ungar1960cutoff} that only individual testing (i.e., $R=1$) can achieve $\dfink=0$ in this setting.

Even though the lower bound in \cref{thm:RD_function} is somewhat cumbersome and difficult to grasp intuitively, it can easily be calculated for a given scenario.  
We will compare it to proposed testing strategies in \cref{fig:rda50} to illustrate its applicability.

\begin{figure}[p!]
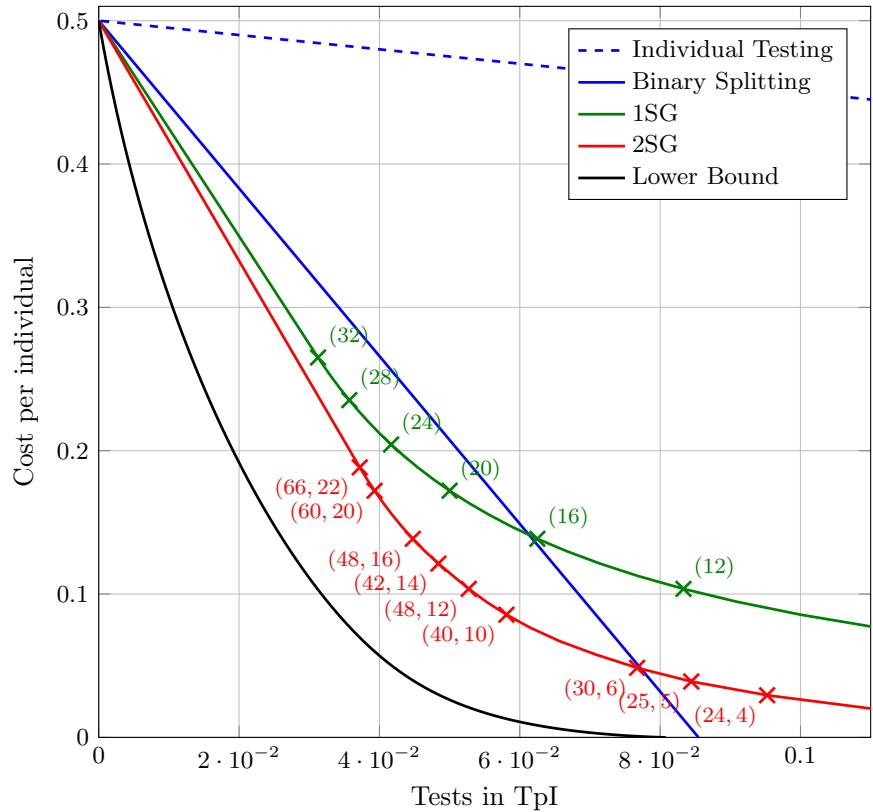

  \centering
  \includestandalone[width=.95\textwidth]{figs/Fig1}
  \caption{
    Testing of a population with probability of infection $p=0.01$ and cost parameters $b=1$ and $c=50$. The lower bound from \cref{thm:RD_function} is compared to the binary splitting algorithm, and our strategies 1SG and 2SG as well as the significantly worse individual testing. Markers with associated numbers indicate the testing strategy used to achieve the given point, e.g., $(32)$ indicates the use of 1SG($32$) and $(66,22)$ indicates the use of 2SG($66$,$22$).}
  \label{fig:rda50}
\end{figure}

We next calculate the necessary number of tests and the resulting cost for some simple testing strategies.
We first consider the strategy, proposed by \citet{jonnerby2020maximising}, to 
separate the population into equally sized and disjoint groups, test each group,
and assign an infected status to each member of a positive group without conducting further tests.
We refer to this strategy as \textit{one stage group testing (1SG)} and designate the size of the group in parenthesis, e.g., 1SG($50$) for a group size of $50$ individuals. 
We restate the following simple result from Section~2.3.1 in \citet{jonnerby2020maximising}, which is a special case of the more general \cref{lem:ksg} below.
\begin{lemma}
    The 1SG($u$) testing strategy has a rate of $1/u$ \ac{TpI} and an expected cost of 
    $D_{\textrm{1SG($u$)}} = b \big(1 - p - (1-p)^u\big)$.
\end{lemma}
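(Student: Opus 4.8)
The plan is to read off both assertions directly from the definition of the 1SG($u$) strategy, with essentially no machinery. I would begin by making explicit the assumption — implicit in ``equally sized and disjoint groups'' — that $u$ divides $N$, so that the test procedure partitions $\{1,\dots,N\}$ into $N/u$ disjoint groups of size $u$ and performs exactly one pooled test per group. Then $K = N/u$, so the rate is $R = K/N = 1/u$ \ac{TpI}, which is the first claim. (If $u\nmid N$ one permits a single smaller final group; since we care about the regime $N\to\infty$ with $R$ held fixed, this boundary group is negligible. One could also simply specialize \cref{lem:ksg} to $k=1$, but the direct computation below is immediate.)

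For the expected cost, the key observation I would use is that the decision procedure assigns $Y_n = 1$ to individual $n$ precisely on the event that the pooled test of its group $G$ is positive, i.e.\ on $\{\max_{m\in G}X_m = 1\}$. Because $p$, $b$, and $c$ are common to all individuals, $\mathbb{E}[\rho_n(X_n,Y_n)]$ is independent of $n$, so by linearity of expectation $D_{\textrm{1SG}(u)}$ equals this common value; it therefore suffices to evaluate it for a single representative.

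I would compute that expectation by conditioning on $X_n$. If $X_n = 1$ the group test is certainly positive, hence $Y_n = 1 = X_n$ and no cost is incurred — infected individuals are never misclassified by 1SG. If $X_n = 0$, then $Y_n = 1$ (a false positive of cost $b_n = b$) exactly when at least one of the other $u-1$ members of $G$ is infected, an event of probability $1-(1-p)^{u-1}$ by independence of the $X_m$; otherwise $Y_n = 0$ at zero cost. Hence
\begin{equation*}
  D_{\textrm{1SG}(u)} = \Pr[X_n=0]\,b\big(1-(1-p)^{u-1}\big) = (1-p)\,b\big(1-(1-p)^{u-1}\big) = b\big(1-p-(1-p)^{u}\big),
\end{equation*}
which is the stated value.

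I do not expect any genuine obstacle here: once $\{Y_n=1\}$ is identified with $\{\max_{m\in G}X_m=1\}$ and one notes that an infected individual always ends up in a positive group, the cost computation is a one-line application of independence. The only point worth a remark is the divisibility assumption $u\mid N$ (equivalently, the treatment of a leftover group), which does not affect the rate/cost pair asserted in the lemma.
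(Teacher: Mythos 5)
Your proof is correct and matches the paper's argument: the paper derives this lemma as the $k=1$ case of \cref{lem:ksg}, whose proof in \cref{app:proofksg} computes the rate as one test per group of size $u$ and the cost exactly as you do, via the observation that infected individuals are never misclassified and a healthy individual is falsely flagged with probability $1-(1-p)^{u-1}$, giving $b(1-p)\big(1-(1-p)^{u-1}\big)=b\big(1-p-(1-p)^u\big)$. Your remark on the leftover group when $u \nmid N$ is also consistent with the paper's discussion immediately after the lemma.
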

Note that the exact rate and expected cost can be achieved only for a population $N$ that is an integer multiple of the group size $u$. 
However, the overhead is at most $1$ additional test for a final group of smaller size, resulting in an additional $1/N$ \ac{TpI} and a negligible decrease of the expected cost. 
Since we are usually concerned with large $N$, we ignore these terms in this and the following testing strategies.

Of course, 1SG($u_1$) can be further extended by testing those individuals again, that belong to a positively tested group of size $u_1$.
Specifically, we can separate the group into disjoint subgroups of smaller size $u_2$ and subject these subgroups again to a group test.
This \textit{two stage group testing (2SG)} strategy is a generalization of Dorfman testing \cite{dorfman43} which corresponds to the case $u_2 = 1$, i.e., each individual in a positive group is tested separately. 
The decision strategy remains the same as in 1SG: Those individuals that belong to a positively tested subgroup (of size $u_2$) are declared infected.
We again designate the group sizes in parenthesis, e.g., 2SG($50,10$) for a group size of $50$ individuals that is divided into five subgroups of $10$ individuals each, if the first group test is positive.  
Again, we can calculate the rate and expected cost in closed form.
\begin{lemma}
    The 2SG($u_1,u_2$) testing strategy has an expected rate of 
            $R_{\textrm{2SG($u_1$,$u_2$)}} = \frac{1}{u_1} + \frac{1-(1-p)^{u_1}}{u_2}$
        and an expected cost per individual of 
    $D_{\textrm{2SG($u_1$,$u_2$)}} = b (1 - p - (1-p)^{u_2})$.
\end{lemma}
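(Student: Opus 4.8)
The plan is to compute the expected number of tests and the expected cost separately by linearity of expectation, handling one first‑stage group of size $u_1$ at a time (assuming as usual that $u_2 \mid u_1 \mid N$ and neglecting the $O(1/N)$ overhead from a final undersized group).

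For the rate, observe that the first stage always uses $N/u_1$ tests. A fixed first‑stage group is declared positive precisely when at least one of its $u_1$ members is infected, which by independence has probability $1-(1-p)^{u_1}$; in that event the group is partitioned into $u_1/u_2$ subgroups, each tested once. Hence the expected number of second‑stage tests attributable to one first‑stage group is $\big(1-(1-p)^{u_1}\big)\, u_1/u_2$, and summing over all $N/u_1$ first‑stage groups and dividing by $N$ yields the claimed expected rate $\frac{1}{u_1} + \frac{1-(1-p)^{u_1}}{u_2}$. Note that, unlike in 1SG, the total number of tests $K$ is now a random variable, so ``rate'' is to be read here as $\mathbb{E}[K]/N$.

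For the expected cost I would first argue that 2SG never produces a false negative, and then reduce the false‑positive bookkeeping to the 1SG computation. Fix an individual $n$ lying in a subgroup $S \ni n$ of size $u_2$. If $X_n = 1$, then $S$ contains an infected member, so its parent first‑stage group is positive, $S$ is therefore tested, its test is positive, and $n$ is declared infected — cost $0$. If $X_n = 0$, then $n$ is declared infected exactly when the test of $S$ is positive, i.e. when some member of $S$ other than $n$ is infected; in that case the parent group is automatically positive, so $S$ is indeed tested, and the (false‑positive) cost is $b$. This event has probability $1-(1-p)^{u_2-1}$, so the expected cost contributed by a healthy individual is $b\,(1-p)\big(1-(1-p)^{u_2-1}\big) = b\big(1-p-(1-p)^{u_2}\big)$; averaging over all $N$ individuals gives the same value, which is exactly $D_{\textrm{1SG($u_2$)}}$ from the previous lemma — consistent with the fact that a size‑$u_2$ subgroup of 2SG behaves, individual by individual, exactly like a size‑$u_2$ group of 1SG.

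There is no genuine obstacle here; the only point to be deliberate about is the implication ``some member of $S$ other than $n$ is infected $\Rightarrow$ the parent group is positive,'' which renders the conditioning on ``$S$ is actually tested'' vacuous and lets the false‑positive probability for a healthy individual depend solely on the $u_2-1$ other members of its own subgroup. Once that is in place, both claimed formulas drop out of elementary Bernoulli arithmetic.
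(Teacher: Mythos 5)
Your proposal is correct and follows essentially the same route as the paper: the paper proves the general $k$-stage lemma by computing the expected tests per individual stage by stage via linearity of expectation, and obtains the cost by noting there are no false negatives and that a healthy individual is falsely declared infected exactly when one of the other $u_2-1$ members of its final subgroup is infected, giving $b(1-p)\bigl(1-(1-p)^{u_2-1}\bigr)$. Your per-group bookkeeping for the rate and your explicit remark that the parent group is automatically positive in the relevant event are just minor presentational variants of the same argument.
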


Evidently, more than 2 stages could be used, but even more bookkeeping is then required and, typically, very large group sizes are needed to obtain a benefit. Thus, this extension may not be practically useful anymore. 
For the sake of completeness, we nevertheless provide the rate and expected cost of $k$ stage group testing, abbreviated as $k$SG($u_1, \dots, u_k$), where $u_\ell$ denotes the group size at stage $\ell$.
Here, positive groups of size $u_\ell$ are  separated into smaller subgroups of size $u_{\ell+1}$ for $\ell=1, \dots, k-1$ and individuals that belong to a positive subgroup at stage $k$ are declared infected, while all others are declared healthy. 
A proof of the result is given in \cref{app:proofksg}.
\begin{lemma}
\label{lem:ksg}
    The $k$SG($u_1, \dots, u_k$) testing strategy has an expected rate of 
    \begin{equation}
    \label{eq:rateksg}
        R_{\textrm{$k$SG($u_1, \dots, u_k$)}} = \frac{1}{u_1} + \sum_{\ell =1}^{k-1}\frac{1-(1-p)^{u_\ell}}{u_{\ell+1}}
    \end{equation}
    and an expected cost per individual of 
    \begin{equation}
        \label{eq:distksg}
        D_{\textrm{$k$SG($u_1, \dots, u_k$)}} = b (1 - p - (1-p)^{u_k})\,.
    \end{equation}
\end{lemma}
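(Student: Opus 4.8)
The plan is to compute the expected number of tests and the expected cost directly from the nested block structure of the $k$SG procedure, using only the independence of the $X_n$ and linearity of expectation.

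First I would fix notation. Assume $u_k \mid u_{k-1} \mid \dots \mid u_1 \mid N$ (the general case only contributes an $O(1/N)$ overhead, which we ignore exactly as in the remark following the 1SG result). For $m \in \{1,\dots,k\}$ partition $\{1,\dots,N\}$ into $N/u_m$ consecutive blocks of size $u_m$, nested so that each level-$(m+1)$ block lies inside a level-$m$ block, and call a block \emph{contaminated} if at least one of its members is infected. The structural observation driving everything is: for $m\ge 2$, the $k$SG procedure tests a level-$m$ block precisely when its parent level-$(m-1)$ block is contaminated, and a tested block comes back positive precisely when it is itself contaminated. Since contamination of a block forces every ancestor block to be contaminated, the level-$m$ blocks that are tested and positive are exactly the contaminated level-$m$ blocks; in particular the decision rule declares individual $n$ infected if and only if $n$'s level-$k$ block is contaminated.

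For the rate, I would count: stage $1$ uses $N/u_1$ tests, and for $m=2,\dots,k$, stage $m$ performs $u_{m-1}/u_m$ tests on each contaminated level-$(m-1)$ block. Each level-$(m-1)$ block is contaminated with probability $1-(1-p)^{u_{m-1}}$ by independence, so in expectation there are $(N/u_{m-1})\bigl(1-(1-p)^{u_{m-1}}\bigr)$ of them; summing and dividing by $N$, with the reindexing $\ell=m-1$, gives \cref{eq:rateksg}. For the cost, I would fix $n$ and evaluate $\mathbb{E}[\rho_n(X_n,Y_n)] = b\Pr[X_n=0,Y_n=1] + c\Pr[X_n=1,Y_n=0]$. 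A false negative cannot occur: if $X_n=1$ then $n$'s level-$k$ block is contaminated, so $Y_n=1$. A false positive occurs iff $X_n=0$ but some other member of $n$'s level-$k$ block is infected, which by independence has probability $(1-p)\bigl(1-(1-p)^{u_k-1}\bigr) = (1-p)-(1-p)^{u_k}$. Averaging over $n$ yields \cref{eq:distksg}.

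The only step requiring real care — and the one I would write out most carefully — is the structural claim that ``tested and positive'' coincides with ``contaminated'' at each level, since this is what lets the adaptive, data-dependent testing pattern be replaced by the data-independent event of block contamination; the rest is bookkeeping. One should also keep in mind the harmless divisibility/overhead caveat already noted for 1SG.
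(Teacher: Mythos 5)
Your proposal is correct and follows essentially the same route as the paper's proof: linearity of expectation over stages with the contamination probability $1-(1-p)^{u_\ell}$ for the rate (the paper counts expected tests per individual per stage, you count expected contaminated blocks, which is the same computation), and the identical cost argument that no false negatives occur and a false positive happens exactly when a healthy individual shares a final-stage group with an infected one. Your explicit justification of the nested "tested and positive $=$ contaminated" claim is a slightly more careful statement of what the paper uses implicitly, but it is not a different method.
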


\begin{example}
\label{ex:a50}
    We consider the setting $p=0.01$, $b=1$, and $c = a = 50$, i.e., we assume that we have a prevalence of $1\%$ and a false negative is fifty times worse than a false positive assignment. 
    We begin with some trivial observations.
    First, if we do not have any tests available, the best strategy is to assign everyone to be healthy. 
    This is because the expected cost of declaring an individual healthy is $\mathbb{E}\big[\rho(X_n, 0)\big] = p\cdot a + (1-p)\cdot  0 = 0.5$,
    while the expected cost for declaring someone infected is $\mathbb{E}\big[\rho(X_n, 1)\big] = p\cdot 0 + (1-p)\cdot 1 = 0.99$.
    On the other extreme is the case of zero cost, i.e., to determine exactly which individuals are infected.
    Here, clearly individual testing at a rate of $1$ \ac{TpI} could be applied, but Theorem~2 in \citet{aldridge2019rates} shows that it is also possible using a binary splitting algorithm at a rate of $0.0855$ \ac{TpI}.
    Any rate-cost tradeoff between these extreme points can be achieved by applying the available tests to as many individuals as possible, while declaring all others as healthy by default.
    Unfortunately, using approaches like a binary splitting algorithm do not come without problems: 
    There is a significant amount of bookkeeping required and individuals usually have to be tested many times in a row delaying the notification about the result.
    Shifting to the strategies 1SG and 2SG that do not aim at identifying the status of each individual but to minimize the overall expected cost can lead to simpler procedures and better performance at the same time.
    Our lower bound in \cref{thm:RD_function}, the binary splitting algorithm, and the strategies 1SG and 2SG, as well as individual testing, for comparison, are illustrated in \cref{fig:rda50}.
    Note, that for all strategies, one can always subject only part of the individuals to a test and simply declare the rest healthy. 
    In particular, we see that the optimal 2SG strategy when tests are scarce (less than $0.037$ \ac{TpI}) is to use the testing strategy 2SG($66, 22$) for as many individuals as possible, outperforming the binary splitting approach. 

\end{example}

The example above illustrates that the proposed strategies, although very simple, can outperform the best known testing strategies if there are not sufficient  tests available to test all individuals. 
This is because previous strategies always aimed at exactly identifying the infection status and did not consider the possibility to declare an infection status based on imperfect information.
However, there are also many situations where the simple strategies we discussed above are useless. 
In particular, if the relative cost $a$ is small, then it is hardly ever useful to declare an individual infected if one is not very sure about the infection status.

We now turn to the more general setting of a heterogeneous population, i.e., individuals may have different prevalence $p_n$ and also different costs $b_n$ and $c_n$.
One simple example of this situation is to distinguish between  individuals with symptoms (that have a higher prevalence $p_n$) and individuals without symptoms.
Additionally, different costs may occur for individuals in critical areas, e.g., health care.
There, a higher risk of infecting vulnerable individuals may increase the cost $c_n$ of false negatives, and additionally, a higher false positive cost $b_n$ might be incurred due to the importance of the work being performed by these individuals.
We will discuss a specific scenario in \cref{sec:case} but first present the lower bound, an extension of \cref{thm:RD_function}.
The proof of the following \lcnamecref{th:partitions} will be presented at the end of \cref{sec:itdetails}, as it requires several results from rate-distortion theory which are introduced there.
\begin{theorem}
\label{th:partitions}
  Let $\mathbf{X} := (X_1, \dots, X_N)$ be $N$ independent Bernoulli random variables describing the infection status of a given population. 
  Assume further that the total population is separated into $I$ subpopulations, of sizes $N^{(1)}, \dots, N^{(I)}$, and that all $N^{(i)}$ individuals in the $i$-th subpopulation have the same probability $p^{(i)}$ of infection and are measured using the same cost function with parameters $b^{(i)}, c^{(i)}$ as in \cref{eq:false}. Define $a^{(i)} = c^{(i)}/b^{(i)} > 0$ and for $v \in [0,1]$, let 
  \begin{align}
    \label{eq:subpopDR}
    D(v) = \frac{1}{N} \sum_{i = 1}^I N^{(i)} b^{(i)}  \bar D\big(p^{(i)}, a^{(i)}, v^{b^{(i)}}\big), \;
                R(v) = \frac{1}{N} \sum_{i = 1}^I N^{(i)} \bar R\big(p^{(i)}, a^{(i)}, v^{b^{{(i)}}}\big),
  \end{align}
  where  $\bar D(p, a, v)$ and $\bar R(p, a, v)$ are defined in \cref{thm:RD_function}.
  Then there cannot exist a testing strategy that uses fewer than $R(v)$ \ac{TpI} (i.e., $R(v) N$ tests in total) and achieves an expected cost less than $D(v)$, i.e., $\dfink \geq D(v)$ for all $K\leq R(v) N$.
\end{theorem}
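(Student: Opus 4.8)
The plan is to view any testing strategy as a lossy source code with at most $2^{K}$ codewords and apply the rate-distortion converse, carried out separately on each subpopulation and reassembled by a supporting-hyperplane (Lagrangian) argument in which the parameter $v$ plays the role of the common, rescaled slope. \textbf{Step 1 (reduction to source coding).} Fix a testing strategy $(\boldsymbol{\vartheta},\kappa)$ using $K$ tests and set $\mathbf{Y}=\kappa(\boldsymbol{\vartheta}(\mathbf{X}))$. Since $\boldsymbol{\vartheta}$ takes values in $\Omega^{K}$, we have $2^{K}\ge|\mathrm{range}(\mathbf{Y})|$, hence $K\ge\log|\mathrm{range}(\mathbf{Y})|\ge H(\mathbf{Y})\ge I(\mathbf{X};\mathbf{Y})$; the OR/adaptivity structure of $\boldsymbol{\vartheta}$ is irrelevant for a lower bound and is discarded here. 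By independence of the $X_{n}$, $I(\mathbf{X};\mathbf{Y})\ge\sum_{n=1}^{N}I(X_{n};Y_{n})$, and the single-letter converse gives $I(X_{n};Y_{n})\ge R_{n}(D_{n})$ with $D_{n}:=\mathbb{E}[\rho_{n}(X_{n},Y_{n})]$ and $R_{n}$ the informational rate-distortion function of $X_{n}\sim\mathrm{Bernoulli}(p_{n})$ under the cost $\rho_{n}$. Thus $K\ge\sum_{n=1}^{N}R_{n}(D_{n})$. The converse step and the explicit parametric form of $R_{n}$ for the asymmetric cost \cref{eq:false} are precisely the rate-distortion facts developed in \cref{sec:itdetails}.

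\textbf{Step 2 (within-subpopulation aggregation).} All individuals of subpopulation $i$ share the same function $R_{n}$, call it $R^{(i)}$. Writing $\delta^{(i)}:=\frac{1}{N^{(i)}}\sum_{n\in\text{subpop }i}D_{n}$ for the average realized cost there, convexity of $R^{(i)}$ and Jensen's inequality give $\sum_{n\in\text{subpop }i}R_{n}(D_{n})\ge N^{(i)}R^{(i)}(\delta^{(i)})$, while $D_{\text{test}}^{(\mathbf{X},\rho)}(\boldsymbol{\vartheta},\kappa)=\frac{1}{N}\sum_{i=1}^{I}N^{(i)}\delta^{(i)}$, so that $K\ge\sum_{i=1}^{I}N^{(i)}R^{(i)}(\delta^{(i)})$. \textbf{Step 3 (cost rescaling and common slope).} For a $\mathrm{Bernoulli}(p)$ source with cost parameters $b$ and $c=ab$, rescaling the cost by $b$ rescales the distortion axis by $b$ and hence the slope of the rate-distortion curve by $1/b$; consequently the point of $R^{(i)}$ at slope $\log v$ is $\big(b^{(i)}\bar D(p^{(i)},a^{(i)},v^{b^{(i)}}),\,\bar R(p^{(i)},a^{(i)},v^{b^{(i)}})\big)$ with $\bar D,\bar R$ as in \cref{thm:RD_function}, which is where the reparametrization $v\mapsto v^{b^{(i)}}$ originates. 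By convexity, for every $v\in[0,1)$ and every $i$ the supporting-line inequality
\[
R^{(i)}(\delta)\ \ge\ \bar R\big(p^{(i)},a^{(i)},v^{b^{(i)}}\big)+(\log v)\Big(\delta-b^{(i)}\bar D\big(p^{(i)},a^{(i)},v^{b^{(i)}}\big)\Big)
\]
holds for all admissible $\delta$; the saturated regime $v^{b^{(i)}}\ge v_{0}$ (where $\bar R=0$) is included because there $\log v\ge\tfrac{1}{b^{(i)}}\log v_{0}$, which is a valid subgradient of $R^{(i)}$ at its kink.

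\textbf{Step 4 (assembly).} Summing the supporting-line bound over $i$ with weights $N^{(i)}$, dividing by $N$, and inserting the definitions of $D(v)$ and $R(v)$ from \cref{eq:subpopDR} yields
\[
\frac{K}{N}\ \ge\ R(v)+(\log v)\big(D_{\text{test}}^{(\mathbf{X},\rho)}(\boldsymbol{\vartheta},\kappa)-D(v)\big).
\]
If $K\le R(v)N$, the left-hand side is at most $R(v)$, so $(\log v)\big(D_{\text{test}}^{(\mathbf{X},\rho)}(\boldsymbol{\vartheta},\kappa)-D(v)\big)\le0$; since $\log v\le0$ for $v\in[0,1)$ this forces $D_{\text{test}}^{(\mathbf{X},\rho)}(\boldsymbol{\vartheta},\kappa)\ge D(v)$, and minimizing over all $(\boldsymbol{\vartheta},\kappa)$ gives $\dfink\ge D(v)$. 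The boundary value $v=1$, where $\log v=0$, $R(v)=0$, and the claim only concerns $K=0$, follows by continuity of $D(\,\cdot\,)$ at $1$ (or directly: with no tests $\mathbf{Y}$ is constant and the optimal constant incurs cost $D(1)$).

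The routine parts are Steps 1, 2 and 4. The step that needs genuine work, and the reason the proof is deferred to \cref{sec:itdetails}, is Step 3: pinning down the exact parametric rate-distortion curve $(\bar D,\bar R)$ of the asymmetric-cost Bernoulli source, including the threshold $v_{0}$ and the zero-rate regime, and verifying that the $b^{(i)}$-rescaling is implemented precisely by $v\mapsto v^{b^{(i)}}$, so that a single $v$ simultaneously describes the optimal distortion allocation across all subpopulations. I expect this to rest on the Shannon lower bound together with the Lagrangian (reverse water-filling-type) characterization of the rate-distortion function, whose details \cref{sec:itdetails} supplies.
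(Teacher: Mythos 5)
Your proof is correct, and its skeleton coincides with the paper's: the reduction of a testing strategy to a source code (your Step 1 is exactly \cref{lem:rdgtrd}, reproved from first principles), the per-individual decomposition via independence followed by the within-subpopulation Jensen step (your Step 2 is the content of \cref{lem:dr_function}), and the single-letter parametric curve $(\bar D,\bar R)$ whose distortion-rate slope is $1/\log v$ (\cref{thm:iid_rd_final} together with \cref{rmk:DR_deriv}), which you correctly flag as the deferred technical core and which the substitution $v\mapsto v^{b^{(i)}}$ indeed encodes after the $b^{(i)}$-rescaling of the cost. Where you genuinely differ is the final assembly: the paper evaluates the joint information distortion-rate function \emph{exactly}, verifying through the KKT conditions that the rate allocation $\xi^{(i)}\propto\bar R\big(p^{(i)},a^{(i)},v^{b^{(i)}}\big)$ with common multiplier $\lambda=R/\log v$ is optimal (\cref{th:subpop2}), and then applies \cref{lem:rdgtrd}; you instead use only weak duality, summing supporting-line (subgradient) bounds at the common slope $\log v$, with a correct treatment of the zero-rate kink when $v^{b^{(i)}}\ge v^{(i)}_0$. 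Your route is slightly more economical for the one-sided claim of \cref{th:partitions}, since it never needs to certify optimality of the allocation, whereas the paper's route additionally delivers the equality $D_{\textrm{I}}^{(\mathbf{X},\rho)}\big(R(v)\big)=D(v)$, which is of independent interest. Two trivial loose ends: the case $v=0$ (where $\log v=-\infty$) should be dispatched directly, noting $D(0)=0$ and nonnegativity of the cost; and your slope claim in Step 3 is precisely \cref{rmk:DR_deriv} after rescaling, so nothing beyond the results already stated in \cref{sec:itdetails} is required.
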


\section{Historic example case}\label{sec:case}

We consider the SARS-CoV2 pandemic situation in Austria in mid November 2020.
On average there were $N=8\,916\,845$ individuals living in Austria in 2020~\cite{StatistikAustria2020Population}.
During the three days from 12th of November 2020 until 14th of November 2020 a total of $N_t = 103\,621$ individuals were subjected to a PCR-test for SARS-CoV2~\cite{OAGEG2021Dashboard}.
Of these $N_t$ individuals, $20\,349$ tested positive~\cite{OAGEG2021Dashboard}.
This corresponds to a prevalence of $p_t = 0.196$ in this tested subpopulation. 
We will refer to this subpopulation as the \textit{high prevalence subpopulation}.
At the same time, a prevalence study found that approximately $3.1\%$ of the total population were  infected~\cite{STATISTIKAUSTRIA2021COVID}.
Thus, in the untested population of $N_u = 
8\,813\,224$ the prevalence was about $p_u = 0.029$.
We will refer to this subpopulation as the \textit{low prevalence subpopulation}.

To illustrate the full potential of our model, we further consider individuals working in health care separately and will assign a higher cost for wrong assignments within this subpopulation. 
The most recent count of health care professionals working in hospitals and health care centers in Austria was 
$N_h = 121\,567$ at the end of 2019~\cite{STATISTIKAUSTRIA2021Personal}.
Note that a finer separation into subpopulations is of course possible but avoided for the sake of simplicity.

It is difficult to argue for the choice of specific costs of wrong assignments. 
However, in mid November 2020, a lockdown was issued in Austria,\footnote{\url{https://www.ris.bka.gv.at/Dokumente/BgblAuth/BGBLA_2020_II_479/BGBLA_2020_II_479.pdfsig} (in German)} which we interpret as the turning point where considering all (untested) individuals as infected is less costly than considering these individuals as healthy. 
This implies $(1-p) b \approx p c$, thus, at a prevalence of $p = 0.029$ in the untested, general population,  we obtain $c \approx 33 b$. 
We normalize $b=1$ and choose $c = 33$.
Since health-care facilities were not closed during this time, we assume that individuals working in health care have a different trade-off between $b$ and $c$.
Specifically, we assume that a false positive assignment for this subpopulation is significantly more expensive and we set $b=6$ for these individuals. 
Although also a different $c$ value could be argued, we assume that the professional training counters the higher risk due to closer contact with susceptible individuals and we keep $c$ the same.

Since there is no data available on the prevalence within the health care system, we assume that it is the same as in the general population. In particular, also the split into the high prevalence and low prevalence subpopulations is assumed to be the same.
We thus end up with the following four subpopulations
\begin{itemize}
        \item The first subpopulation consists of individuals working in health care that belong to the high prevalence subpopulation with $p^{(1)}=0.196$ 
        and costs $b^{(1)}=6$, $c^{(1)} = 33$. 
            We assume that $N^{(1)} = 1\,413$ belong to this subpopulation.
        \item The second subpopulation consists of individuals working in health care that belong to the low prevalence subpopulation with $p^{(2)}=0.029$ 
        and costs $b^{(2)}=6$, $c^{(2)} = 33$. 
            We assume that $N^{(2)} = 120\,154$ belong to this subpopulation.
        \item The third subpopulation consists of individuals not working in health care that belong to the high prevalence subpopulation with $p^{(3)}=0.196$ 
        and costs $b^{(3)}=1$, $c^{(3)} = 33$. 
            We assume that $N^{(3)} = 102\,208$ belong to this subpopulation.
        \item The fourth subpopulation consists of individuals not working in health care that belong to the low prevalence subpopulation with $p^{(4)}=0.029$ 
        and costs $b^{(4)}=1$, $c^{(4)} = 33$. 
        We assume that $N^{(4)} = 8\,693\,070$ belong to this subpopulation.
\end{itemize}

\begin{figure}[tbh]
  \centering
  \begin{tikzpicture}
    \begin{axis}[
      DRaxis,
      legend entries={Individual Testing, 1SG/2SG, Binary Splitting, Lower Bound},
            ymin=0.0,ymax=1.0,
      xmin=0.0, xmax=2.5e4,
      max space between ticks=45pt,
      ]
      \addplot+[DRplot, red, dotted] table[DRtable] {data/individual_p0.20_0.03_0.20_0.03.csv};
      \addplot+[DRplot, blue, solid] table [DRtable] {data/2SG_p0.20_0.03_0.20_0.03.csv};
      \addplot+[DRplot, green!50!black, solid] table [DRtable] {data/binsplit_p0.20_0.03_0.20_0.03.csv};
      \addplot+[DRplot, black] table[DRtable] {data/RD_p0.20_0.03_0.20_0.03.csv}; 
      \addplot+[DRlabeled, blue, only marks] table [DRtable_label] {data/2SG_p0.20_0.03_0.20_0.03_labels.csv};
    \end{axis}
  \end{tikzpicture}
  \caption{Testing of the Austrian population grouped into 4 different sub-populations in mid November 2020.
       The lower bound from \cref{th:partitions} is compared to the binary splitting algorithm, and an optimal combination of strategies 1SG and 2SG as well as the significantly worse individual testing. Markers with associated numbers indicate the four testing strategies used to achieve the given point, where $\infty$ indicates that the given subpopulation is not tested, e.g., $(2)(\infty)(\infty)(30,10)$ indicates that subpopulations $2$ and $3$ are not tested, subpopulation $1$ is tested using  1SG($2$), and subpopulation $4$ is tested using  2SG($30$,$10$).}
  \label{fig:subpop}
\end{figure} 

Our lower bound and the performance of the optimal combinations of 1SG and 2SG strategies for these specific parameters are illustrated in \cref{fig:subpop}
in comparison to  optimized binary splitting algorithms and individual testing.
More specifically, the depicted curves correspond to the optimal use of the available tests under the given strategy.
For individual testing and the binary splitting algorithms, this means that depending on the number of tests, a certain amount of people can be tested and correctly informed about their status, 
while many untested individuals remain and are assigned a status without being tested resulting in the depicted cost.

For the $103\,621$ tests used during the discussed period, which correspond to $0.0116$ \ac{TpI}, the optimized strategy is to subject as many individuals from the fourth subpopulation as possible to a test using a 1SG($33$) testing strategy. 
By this approach, the expected cost is $0.816$. 
Individual testing could only reduce the expected cost to $0.944$, hardly any improvement over not testing at all at an expected cost of $0.956$.
Note that in all subpopulations and strategies we assume that an optimal decision for untested individuals is used, namely, that a healthy status is assigned to untested individuals in subpopulations two and four and an infected status to untested individuals in subpopulations one and three.

To illustrate the implications of this kind of strategy, we calculate the expected number of individuals that are considered infected.
First, this number includes all individuals in the subpopulations one and three.
Furthermore, each 1SG($33$) test in subpopulation four has a probability of $1-(1-0.029)^{33} \approx 0.621$ to result in a positive outcome. 
Thus, an additional expected number of  
$2\,124\,712$ individuals would be considered infected in this subpopulation.
In total, an expected number of $2\,228\,333$ individuals would be considered infected.

We see that the decrease in expected cost is very limited with such a low number of tests, even if an optimized strategy is used.
In particular, our lower bound shows that no strategy can reduce the expected cost below $0.609$ using the assumed $103\,621$ tests. 
However, our theory can also be used to estimate how many tests are required to achieve a certain reduction in expected cost. 
For example, we can find the number of tests necessary to obtain half the expected cost than without testing, i.e., $0.478$.
Our lower bound shows that at least $0.0226$  \ac{TpI} ($201\,256$ tests in total) are necessary, while our optimized strategies show that $0.0419$  \ac{TpI} ($373\,636$ tests in total) are sufficient.
Our optimized strategies suggest to use a 1SG($4$) strategy to test all individuals in subpopulation one and a mix of 1SG($24$) and  1SG($23$) strategies to test all individuals in  subpopulation four.
We also want to point out that an astonishing number of $0.499$  \ac{TpI} ($4\,447\,461$ tests in total) would be required to achieve the same goal with individual testing;
and even the previously best known binary splitting testing strategies require $0.102$ \ac{TpI} ($909\,637$ tests in total).

A second example case is presented in~\cref{sec:second-example-case}, analyzing the situation in Austria during the SARS-CoV-2 pandemic in early April 2020.
The two examples showcase how our theory can be used to explore possible strategies for a given set of parameters and a given number of tests. 
Furthermore, they already illustrate that the specific scenario can affect all parts of the optimal testing strategy, e.g., there is no general rule that the testing of individuals with highest prevalence has priority. 
Maybe surprisingly, the examples also illustrate that there are situations where our approach can result in substantial gains using very simple testing strategies. 
Any other parameter choices can be explored using the Jupyter notebook and Python code, available at \url{https://github.com/g-pichler/group-testing}. This software was created using a Jupyter Notebook~\citep{Perez2007IPython}, as well as SciPy~\citep{Virtanen2020SciPy}, NumPy~\citep{Harris2020Array}, and Matplotlib~\citep{Hunter2007Matplotlib}.

Finally, we want to emphasize, that the historic example cases discussed here certainly do not completely reflect the reality of testing in Austria in 2020. In addition to the somewhat arbitrary choices of values $b, c$, our analysis suffers from the shortcomings that are to be discussed in~\cref{sec:discussion}. E.g., considering the first example case~(\cref{fig:subpop}), a 1SG($33$) test is positive with probability around $0.621$, while the chance of infection is less than $0.03$. One would not expect this strategy to be viable, as compliance will be low.

\section{Information theoretic details}
\label{sec:itdetails}

Readers familiar with information theory will see the obvious analogues between our problem formulation and classical rate-distortion theory \cite{gray90source}.
Indeed, an $(N,K)$-test function can be seen as a specific encoder of a binary sequence, whereas the corresponding $(K,N)$-decision procedure is a specific decoder.
Thus, $\dfink$ defined in \cref{eq:defdfink} is the minimal expected distortion of a restricted class of source codes of rate $R = K/N$ for the vector $\mathbf{X}$.

A fundamental information-theoretic result provides a lower bound on the minimal expected distortion of an arbitrary source code of rate $R$, the so-called information distortion-rate function.
\begin{definition}
  \label{def:DR_function}
  Let $\mathbf{X}$ be a random variable on $\Omega^N$ and $\rho\colon \Omega^N \times \Omega^N \to \mathbb R_+$ a distortion function.
  The \emph{information distortion-rate function} of $(\mathbf{X}, \rho)$ is given by
  \begin{align}
    \label{eq:DR_function}
    D_{\textrm{I}}^{(\mathbf{X}, \rho)}(R) 
    = \min_{p_{\mathbf{Y}|\mathbf{X}} : I(\mathbf{X}; \mathbf{Y}) \le NR} \mathbb E_{p_{\mathbf{X}} p_{\mathbf{Y}|\mathbf{X}}} \bigg[ \frac{1}{N} \rho(\mathbf{X}, \mathbf{Y}) \bigg],
  \end{align}
  where $I({}\cdot{};{}\cdot{})$ denotes mutual information in bits \cite[Sec.~2.6]{gray90source}.
\end{definition}
Because $D_{\textrm{I}}^{(\mathbf{X}, \rho)}(R)$ is a lower bound on the minimal expected distortion of an arbitrary source code of rate $R$, it can be used to obtain a lower bound on $\dfink$. The following \lcnamecref{lem:rdgtrd} is a direct consequence of Theorem~3.2.1 in \cite{gray90source} and a proof is thus omitted.
\begin{lemma}
\label{lem:rdgtrd}
            The minimal cost $\dfink$, as defined in \cref{eq:defdfink}, is lower-bounded by the information distortion-rate function $D_{\textrm{I}}^{(\mathbf{X}, \rho)}(R)$, i.e., for all $R\geq 0$, 
    $D^{(\mathbf{X}, \rho)}\big(K) 
        \geq D_{\textrm{I}}^{(\mathbf{X}, \rho)}(R)$,
    when $K \le RN$.
    In particular, no testing strategy can achieve an expected cost strictly less than $D_{\textrm{I}}^{(\mathbf{X}, \rho)}(R)$ if only $R$ \ac{TpI} are available.
\end{lemma}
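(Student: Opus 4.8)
The plan is to run the standard converse argument: show that every group-testing strategy induces a feasible conditional distribution in the minimization defining $D_{\textrm{I}}^{(\mathbf{X}, \rho)}(R)$, so that the operational cost of that strategy is bounded below by the information distortion--rate function, and then minimize over all strategies. This is exactly what Theorem~3.2.1 in \cite{gray90source} asserts for general source codes; the only thing to check here is that a group-testing strategy \emph{is} a source code of the required rate, so that the cited result applies to the restricted class of codes arising from group testing.

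Concretely, fix an $(N,K)$-test procedure with test function $\boldsymbol{\vartheta}$ and a $(K,N)$-decision procedure $\kappa$, and set $\mathbf{Y} := \kappa(\boldsymbol{\vartheta}(\mathbf{X}))$, letting $p_{\mathbf{Y}|\mathbf{X}}$ be the (deterministic) conditional law this induces. By the definition of the expected cost per individual,
\begin{align}
  \mathbb{E}_{p_{\mathbf{X}} p_{\mathbf{Y}|\mathbf{X}}}\Big[\tfrac1N \rho(\mathbf{X},\mathbf{Y})\Big] = D_{\text{test}}^{(\mathbf{X}, \rho)}(\boldsymbol{\vartheta}, \kappa),
\end{align}
so it suffices to verify the rate constraint $I(\mathbf{X};\mathbf{Y}) \le NR$ in order to conclude that $p_{\mathbf{Y}|\mathbf{X}}$ is admissible in \cref{eq:DR_function}. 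Writing $\mathbf{T} := \boldsymbol{\vartheta}(\mathbf{X}) \in \Omega^K$ for the vector of test outcomes, we have the Markov chain $\mathbf{X} \to \mathbf{T} \to \mathbf{Y}$ (indeed $\mathbf{Y}$ is a deterministic function of $\mathbf{T}$), whence
\begin{align}
  I(\mathbf{X};\mathbf{Y}) \le H(\mathbf{Y}) \le H(\mathbf{T}) \le \log|\Omega^K| = K \le RN,
\end{align}
using the hypothesis $K \le RN$ in the last step. Therefore $D_{\text{test}}^{(\mathbf{X}, \rho)}(\boldsymbol{\vartheta}, \kappa) \ge D_{\textrm{I}}^{(\mathbf{X}, \rho)}(R)$, and taking the minimum of the left-hand side over all $(N,K)$-test and $(K,N)$-decision procedures gives $\dfink \ge D_{\textrm{I}}^{(\mathbf{X}, \rho)}(R)$, as claimed. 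The final sentence of the lemma (no testing strategy beats $D_{\textrm{I}}^{(\mathbf{X}, \rho)}(R)$ at rate $R$) is then just a restatement.

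I do not anticipate a genuine obstacle; the one point that deserves care is that adaptivity does not break the entropy bound. Although the group queried at step $k$ depends on the earlier outcomes, the overall output $\mathbf{T}$ still lies in the fixed finite alphabet $\Omega^K$, so $H(\mathbf{T}) \le K$ bits holds verbatim and the strategy is a (restricted) rate-$R$ source code. If one prefers to avoid the data-processing inequality, one can instead bound $I(\mathbf{X};\mathbf{Y}) \le H(\mathbf{Y})$ directly and then use that $\mathbf{Y}$ is a function of $\mathbf{T}$ to obtain $H(\mathbf{Y}) \le H(\mathbf{T}) \le K$. Either way the argument collapses to the general converse of \cite{gray90source}, which is why the detailed proof can legitimately be omitted.
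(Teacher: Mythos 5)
Your proof is correct and is essentially the argument the paper has in mind: the paper simply cites Theorem~3.2.1 of \cite{gray90source} and omits the details, and your write-up fills in exactly that standard converse, i.e., the induced deterministic conditional law $p_{\mathbf{Y}|\mathbf{X}}$ with $I(\mathbf{X};\mathbf{Y})\le H(\boldsymbol{\vartheta}(\mathbf{X}))\le K\le NR$ is feasible in \cref{eq:DR_function}, so every strategy's cost is at least $D_{\textrm{I}}^{(\mathbf{X},\rho)}(R)$, and minimizing over strategies gives the claim. Your remark that adaptivity is harmless because the test outcomes still live in the fixed alphabet $\Omega^K$ is precisely the point that makes the citation legitimate.
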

To prove our main results, it remains to characterize the information distortion-rate function $D_{\textrm{I}}^{(\mathbf{X}, \rho)}(R)$. We first focus on the case $N=1$ and will subsequently extend our setting to prove \cref{th:partitions}.
The following result is based in the variational description of the information distortion-rate function (see Section 2.4 in \citet{gray90source}) and a detailed proof is provided in \cref{app:proofiid_rd_final}.

 \begin{theorem}
   \label{thm:iid_rd_final}
  Let $X\in \Omega$ be a Bernoulli($p$) random variable 
    and $\rho\colon \Omega^2\to [0,\infty)$ a distortion function satisfying $\rho(0,0)=\rho(1,1)=0$, $\rho(0,1)=1$, and $\rho(1,0)=a$.
  Then the entire information distortion-rate function is parameterized as $D_{\textrm{I}}^{(X, \rho)}\big(\bar R(p, a, v)\big) = \bar D(p, a, v)$ with $v \in [0, 1]$.
  Here, $\bar D(p, a, v)$ and $\bar R(p, a, v)$ are defined in \cref{thm:RD_function}.
\end{theorem}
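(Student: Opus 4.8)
By \cref{def:DR_function}, $D_{\textrm{I}}^{(X, \rho)}$ is the distortion-rate function of a single Bernoulli($p$) source under the asymmetric cost $\rho$; equivalently I will compute the rate-distortion function $R_{\textrm{I}}(D)=\min\{I(X;Y):\mathbb E[\rho(X,Y)]\le D\}$, which is its inverse. The plan is to invoke the Lagrangian (variational) characterization of the rate-distortion curve recalled in Section~2.4 of \cite{gray90source}: every point on the lower boundary is attained, for some slope $s\le 0$, by a test channel of the Gibbs form $p_{Y|X}(y|x)=q^\ast(y)\,v^{\rho(x,y)}/c_x$ with $c_x=\sum_{y'}q^\ast(y')v^{\rho(x,y')}$, where $v=2^{s}\in(0,1]$ plays the role of the free parameter and $q^\ast$ is the optimal reproduction distribution. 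Sweeping $v$ over $[0,1]$ should then recover exactly the parametrization $D_{\textrm{I}}^{(X, \rho)}(\bar R(p,a,v))=\bar D(p,a,v)$.

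\emph{Solving for the optimal channel and back-substituting.} Writing $q^\ast=\Pr[Y=1]$ and using $\rho(0,1)=1$, $\rho(1,0)=a$, one has $c_0=(1-q^\ast)+q^\ast v$ and $c_1=(1-q^\ast)v^{a}+q^\ast$. For $q^\ast\in(0,1)$ the stationarity conditions of the dual read $\sum_x p_X(x)\,v^{\rho(x,y)}/c_x=1$ for $y\in\{0,1\}$, which is a linear system in $1/c_0,1/c_1$; solving it gives $c_0=(1-p)(1-v^{a+1})/(1-v^{a})$, $c_1=p(1-v^{a+1})/(1-v)$, and hence a closed form for $q^\ast$. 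Since the primal is a convex program (minimizing the convex functional $I(X;Y)$ over a convex set of channels under a linear distortion constraint), this stationary point is the global optimum. Back-substitution then completes the computation: $\mathbb E[\rho(X,Y)]$ gets contributions only from the pairs $(0,1)$ and $(1,0)$, and simplification yields $\bar D(p,a,v)$ as in \eqref{eq:defdviidsup}; for the rate, the Gibbs form gives $\log\big(p_{Y|X}(y|x)/q^\ast(y)\big)=\rho(x,y)\log v-\log c_x$, so $I(X;Y)=\mathbb E[\rho(X,Y)]\log v-(1-p)\log c_0-p\log c_1$, and expanding the $\log c_x$ terms reproduces $\bar R(p,a,v)$ in \eqref{eq:defrviidsup}, with the $H_2(p)$ contribution arising from $-(1-p)\log(1-p)-p\log p$.

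\emph{Degenerate regime, endpoints, and completeness.} The interior solution is valid only while $q^\ast\in[0,1]$; the boundary cases $q^\ast=0$ and $q^\ast=1$ are precisely the vanishing of the first and second factor in \eqref{eq:solv0}, so $v_0$ is the smallest $v>0$ at which $q^\ast$ leaves $[0,1]$. For $v\ge v_0$ the optimal $Y$ is deterministic, and the better of the two constants ($Y\equiv 0$ has cost $ap$, $Y\equiv 1$ has cost $1-p$) gives $R=0$ and $D=\min\{1-p,ap\}$, matching the definition on $[v_0,1]$; at the other extreme $v\to 0$ gives $(\bar R,\bar D)\to(H_2(p),0)$, the zero-distortion point requiring rate equal to the entropy. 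I will then verify that $v\mapsto\bar R(p,a,v)$ is nonincreasing and $v\mapsto\bar D(p,a,v)$ nondecreasing and continuous on $[0,1]$, so that $\{(\bar R(p,a,v),\bar D(p,a,v)):v\in[0,1]\}$ traces the entire graph of the convex distortion-rate function. The main work — and the place where errors are easy to make — is the algebraic simplification of $\mathbb E[\rho(X,Y)]$ and $I(X;Y)$ into the exact closed forms \eqref{eq:defdviidsup}--\eqref{eq:defrviidsup}, together with correctly matching \eqref{eq:solv0} to the boundary condition $q^\ast\in\{0,1\}$ and confirming global optimality and monotonicity on each regime.
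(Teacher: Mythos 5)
Your proposal follows essentially the same route as the paper's proof: the paper likewise invokes Gray's parametric (variational) characterization with $v=2^{s}$ (Corollary~4.2.1 in \citet{gray90source}, restated as \cref{cor:parrdfunc}), solves for the optimal output probability $q^\ast$ in closed form (\cref{lem:opimqs}), identifies the degenerate regime $q^\ast\in\{0,1\}$ with the factors of \cref{eq:solv0} via convexity of $f(u)=pu^{a+1}+1-p-u$, back-substitutes to obtain \cref{eq:defdviidsup,eq:defrviidsup}, and handles $v\in\{0,1\}$ by continuity. Only a cosmetic remark: in the paper's convention $q^\ast=1$ corresponds to the vanishing of the first factor of \cref{eq:solv0} and $q^\ast=0$ to the second (the reverse of your ordering), which does not affect the identification of $v_0$ or the value $\min\{1-p,ap\}$ in the zero-rate regime.
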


\begin{remark}
  \label{rmk:DR_deriv}
  For later use, we note that under the assumptions of \cref{thm:iid_rd_final}, $1/ (\log v)$ is the slope of the information distortion-rate function \cite[p.~86]{gray90source} for $v \in [0,v_0]$, i.e., 
  $\dot D_{\textrm{I}}^{(X, \rho)}\big(R(p,a,v)\big) = (\log v)^{-1}$.
  Furthermore, note that $\big(\bar R(p, a, v_0), \bar D(p, a, v_0)\big)$ is the point $(0, \min\{ap,1-p\})$ on the distortion-rate curve. Thus, in particular, $\dot D_{\textrm{I}}^{(X, \rho)}(0) = (\log v_0)^{-1}$.
\end{remark}

We can now state the central characterization result of the information distortion-rate function in the setting of \cref{th:partitions}.
A detailed proof of the result is provided in \cref{app:proofsubpop2}.

\begin{theorem}
  \label{th:subpop2}
  Let $\mathbf{X}$ be as in \cref{th:partitions}.
  Then $D_{\textrm{I}}^{(\mathbf{X}, \mathbf{\rho})}\big(R(v)\big) = D(v)$ for every $v \in [0,1]$,
  where $D(v)$ and $R(v)$ are given in \cref{eq:subpopDR}.
\end{theorem}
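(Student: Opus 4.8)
The plan is to exploit the product structure of $(\mathbf{X}, \rho)$: reduce the computation of the information distortion-rate function to a finite-dimensional convex optimization over per-individual rate allocations, and then solve that program by Lagrange duality, using the single-letter characterization of \cref{thm:iid_rd_final} and the slope formula of \cref{rmk:DR_deriv} to identify the optimal parameterization.

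First I would establish a single-letterization. Since the $X_n$ are independent and $\rho$ is additive, the minimizing test channel $p_{\mathbf{Y}|\mathbf{X}}$ in \cref{eq:DR_function} may be taken of product form $\prod_{n} p_{Y_n|X_n}$: replacing an arbitrary channel by the product of its marginals leaves every coordinatewise distribution $p_{X_nY_n}$ (hence every expected distortion $\mathbb{E}[\rho_n(X_n,Y_n)]$) unchanged, while the rate constraint can only improve, because $I(\mathbf{X};\mathbf{Y}) \ge \sum_n I(X_n;Y_n)$ for independent $X_n$ (this follows from $H(\mathbf{X}) = \sum_n H(X_n)$ together with $H(\mathbf{X}\mid\mathbf{Y}) \le \sum_n H(X_n\mid Y_n)$) and the product channel attains this sum with equality. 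Writing $\rho_n = b^{(i)}\tilde\rho_n$ with $\tilde\rho_n(0,1)=1$, $\tilde\rho_n(1,0)=a^{(i)}$ for an individual $n$ in the $i$-th subpopulation, and noting that scaling a distortion by $b^{(i)}>0$ scales the single-letter information distortion-rate function by $b^{(i)}$ without altering its rate axis, and that all individuals within a subpopulation are statistically identical (so by convexity of the single-letter curve and Jensen an optimal allocation gives them a common rate $r^{(i)}$), we arrive at
\begin{equation*}
  D_{\textrm{I}}^{(\mathbf{X}, \rho)}(R) = \frac{1}{N}\min\Bigl\{\,\textstyle\sum_{i=1}^I N^{(i)} b^{(i)} g^{(i)}\bigl(r^{(i)}\bigr) \;:\; r^{(i)}\ge 0,\ \sum_{i=1}^I N^{(i)} r^{(i)} \le NR \,\Bigr\},
\end{equation*}
where $g^{(i)} := D_{\textrm{I}}^{(X,\tilde\rho^{(i)})}$ is the unit-cost single-letter curve of a Bernoulli($p^{(i)}$) source with ratio $a^{(i)}$, characterized by \cref{thm:iid_rd_final} as $g^{(i)}\bigl(\bar R(p^{(i)},a^{(i)},w)\bigr)=\bar D(p^{(i)},a^{(i)},w)$.

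Next I would solve this convex program by Lagrange duality. Each $g^{(i)}$ is convex, non-increasing, smooth on the interior of its active range with derivative $1/\log w$ at the point parameterized by $w\in(0,v_0^{(i)})$ (by \cref{rmk:DR_deriv}), one-sided derivative $1/\log v_0^{(i)}$ at rate $0$, and constant equal to $\min\{1-p^{(i)},a^{(i)}p^{(i)}\}$ thereafter. Introducing a multiplier $\mu>0$ for the (active) rate constraint, the stationarity condition for a subpopulation with positive allocated rate reads $b^{(i)}(g^{(i)})'(r^{(i)}) = -\mu$, i.e.\ $b^{(i)}/\log w_i$ is independent of $i$; a subpopulation pinned at $r^{(i)}=0$ is characterized by the complementary KKT inequality $1/\log v_0^{(i)} \ge -\mu/b^{(i)}$, equivalently $w_i := v^{b^{(i)}} \ge v_0^{(i)}$ once we set the common slope so that $\log w_i = b^{(i)}\log v$ for a single parameter $v = 2^{-1/\mu}\in(0,1)$. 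Thus the optimal allocation is $r^{(i)} = \bar R(p^{(i)},a^{(i)},v^{b^{(i)}})$, and the piecewise definition of $\bar R,\bar D$ in \cref{thm:RD_function} (branching at $v_0^{(i)}$) reproduces exactly both KKT regimes. Substituting back into the objective and the rate constraint yields precisely $D(v)$ and $R(v)$ of \cref{eq:subpopDR}; convexity makes this KKT point the global optimum, so $D_{\textrm{I}}^{(\mathbf{X}, \rho)}(R(v)) = D(v)$. The endpoints $v=0$ (every subpopulation at its maximal rate $H_2(p^{(i)})$, zero distortion) and $v=1$ (every subpopulation at rate $0$) are read off directly from the formulas, and as $v$ sweeps $[0,1]$ the multiplier $-1/\log v$ sweeps $[0,\infty]$, covering every $R\ge 0$.

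The main obstacle I anticipate is the careful bookkeeping around the non-smooth regime: for a fixed $v$ some subpopulations operate on the strictly decreasing part of their curve ($v^{b^{(i)}}<v_0^{(i)}$) while others are pinned at rate $0$ ($v^{b^{(i)}}\ge v_0^{(i)}$), and one must verify that the $v_0$-branching in \cref{thm:RD_function} matches the KKT solution in both cases and that the map $v\mapsto R(v)$ is monotone so that the parameterization indeed traces the entire curve. A secondary technical point is making the product-form reduction fully rigorous near degenerate channels (when some optimal $r^{(i)}$ hits the extreme value $H_2(p^{(i)})$), which is handled by a routine continuity argument. The remaining work — substituting the parameterization and simplifying the two weighted sums — is elementary algebra.
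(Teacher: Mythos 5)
Your proposal is correct and follows essentially the same route as the paper's proof: a single-letterization of $D_{\textrm{I}}^{(\mathbf{X},\rho)}$ via product channels and Jensen's inequality (the paper's \cref{lem:dr_function}), followed by a KKT/Lagrangian solution of the resulting convex rate-allocation problem using the slope formula of \cref{rmk:DR_deriv}, with the same identification $r^{(i)}=\bar R\big(p^{(i)},a^{(i)},v^{b^{(i)}}\big)$ and the same treatment of the pinned regime $v^{b^{(i)}}\ge v_0^{(i)}$. No substantive differences or gaps.
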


Combining the previous results, we can now prove \cref{th:partitions,thm:RD_function}.
\begin{proof}[Proof of~\cref{th:partitions,thm:RD_function}]
  By~\cref{lem:rdgtrd}, we have for all $R\geq 0$, that $\dfink \geq D_{\textrm{I}}^{(\mathbf{X}, \rho)}(R)$ if $K \le NR$. Using the characterization of the information distortion-rate function in~\cref{th:subpop2}, this yields
  $\dfink \geq D(\nu)$ whenever $K \le NR(\nu)$, concluding the proof of~\cref{th:partitions}.

  \Cref{thm:RD_function} is merely a special case of \cref{th:partitions} with $I=1$.
\end{proof}

Finally, we want to point out that many information theoretic questions about the problem remain open. 
In particular, the characterization of a ``group testing distortion-rate function'' that can be defined as the infimum of all expected distortions that are achievable at a given rate as the population size $N$ goes to infinity is a challenging new problem that can also be formulated as a problem in classical rate-distortion theory with significant restrictions on the encoder. 
Here, non-adaptive group testing may be easier to analyze because the encoder is essentially restricted to the $\max(\,\cdot\,)$ function.
Another direction for future work is to incorporate testing errors, resulting in a joint source-channel coding scenario. 
It is not clear whether the source--channel separation theorem holds in this case. Nevertheless, we expect that a lower bound similar to the one presented here can be obtained, by adding the additional bits (i.e., tests) necessary for error-free communication over a binary (most likely asymmetric) channel.

\section{Discussion}
\label{sec:discussion}
We introduced a rigorous mathematical formulation of optimal testing strategies for a given number of available group tests. 
The problem is formulated as a one-time testing procedure in the sense that we assume that the infection status of the population does not change during the testing procedure and we do not consider any time evolution.
This enables us to use only very few parameters that describe a current pandemic situation and we do not require any detailed personal information such as contact maps between individuals.
However, even these few parameters can be difficult to estimate and may also result in ethical challenges.
The prevalence $p$ is the easiest and most obvious parameter and can be estimated using a negligible number of tests for a pilot study and the estimate can be improved as the testing strategy is applied.
The costs $b$ and $c$ of wrong assignments are more difficult to choose.
They can be adapted to the infectiousness of the disease, the cost and effectiveness of quarantine, non-pharmaceutical interventions that reduce the risk of infection, and most likely many more variables. 
How to exactly choose these costs is beyond the scope of our work and very specific for a given situation.
It may also include political decisions by weighing health factors (e.g., minimize the number of infections by quarantining many individuals) against economic factors (e.g., minimize the number of quarantined individuals). 
Note however that at least implicitly these costs are used in political decisions: A society-wide lockdown can be interpreted as the assertion, that assigning all (untested) individuals an infected status is less costly than considering them to be healthy. 
Thus, at a prevalence $p$, this implies $(1-p) b \le p c$. 

Once the parameters are fixed for a given scenario, our theory on the one hand gives ultimate bounds on how large a cost reduction can be achieved by group testing, and on the other hand suggests optimal allocation of resources for simple (suboptimal) testing strategies.

The present work is only a first step towards establishing a mathematical theory of group testing.
Thus, we do not consider the most general scenario and focus on basic scenarios that are simplified in many aspects. 
In particular, the scope of our results is limited by the following assumptions: 
\begin{itemize}
\item We assume that tests are perfect, i.e., a group test of $u$ individuals is negative if and only if all $u$ individuals are healthy. However, we expect that a small error probability will not significantly influence the results and simple simulations incorporating these errors can be used to check the robustness in a specific scenario. A rigorous theoretical treatment on the other hand should not only consider a fixed error probability for a group test, but the  error probability should rather depend on the group size and the number of infected individuals within the group (cf.~\citet{pilcher2020group}). 
Thus, an extension of our work in this direction would imply a more quantitative approach of testing outcomes and not merely the binary options we consider here. 
\item In this work, no upper bound on the group size is assumed a priori. The information theoretic lower bounds would not be affected by such a limitation, but certain points of the $k$SG strategy will become infeasible, requiring minor modifications to the published code. However, a more thorough analysis would not impose a hard limit on the group size, but consider the trade-off between test accuracy and group size, as mentioned in the previous point.
\item We assume that the infection status of different individuals is independent.
  \item We assume that the choice of testing strategy does not alter the cost of wrong assignment. This assumption may be violated in reality, as e.g., individuals that are assigned an ``infected'' will likely show different levels of compliance, depending on whether they were tested individually, in a group, or not at all.
  Going beyond this assumption could be achieved by changing the infection status from a binary decision to a probability assignment and using a suitably adapted cost function.
\item The cost function is fixed and applied independently to each individual. E.g., we cannot express the fact, that quarantining a small number of individuals working in critical areas is hardly problematic but once a critical threshold is passed the cost of quarantining further individuals becomes more costly than the  linear increase assumed in our model.
\item We do not consider the collection of samples from individuals as a limiting factor, but merely the number of  tests is limited.
  \item The testing strategies presented in this work are not at all optimal but merely illustrate the potential of our approach. For example, we do not consider testing individuals in several groups at the first stage (as, e.g., in the array testing approach by~\citet{phatarfod1994use}), nor mixing individuals from different subpopulations within a group. 
  Nevertheless, our suggested strategies are surprisingly simple and can outperform even the best (significantly more complicated) strategies currently known.
\item We consider testing as a stand-alone task and do not incorporate it into a larger disease model, as is done, e.g., in \citet{berger2020seir}. In particular, the probabilities of infection and the cost functions are assumed to be fixed and do not change over time.
\end{itemize}

From a theoretical viewpoint, our analysis reveals several surprising insights that are in contrast to long established fundamental statements in group testing theory.
First, an established statement is that group testing is beneficial exactly in the regime $p< \frac{1}{2}(3-\sqrt{5}) \approx 0.381$ \cite{ungar1960cutoff}, i.e., in this regime, group testing cannot outperform individual testing.
However, this result was proven under the assumption that perfect identification of the infection status of each individual is required. 
Maybe surprisingly, this statement does not extend to our setting.
Indeed, for the specific case of $p= \frac{1}{2}(3-\sqrt{5})$, $b=1$, and $c=10$, using the testing strategy 1SG($2$) has a  strictly lower expected cost than individually testing every second individual (both have rate $1/2$ \ac{TpI}). 
Similarly, in the non-adaptive regime (i.e., all tests are performed in parallel) an established result is that individual testing is optimal if all individuals are independent and have a fixed prevalence $p$ \cite{aldridge2018individual}. 
Again, this result only holds if perfect identification of the infection status of each individual is required.
In our setting, the simple 1SG testing strategy is a non-adaptive strategy and is clearly superior to individual testing (see \cref{fig:rda50}).

\begin{appendix}

\section{Second example case}
\label{sec:second-example-case}

As a second example with significantly different prevalences, we consider the SARS-CoV2 pandemic situation in Austria in early April 2020.
More specifically, another prevalence study was conducted from 1st of April until 6th of April 2020. 
Here, a prevalence of approximately $p = 0.0033$ was found~\cite[Section~3.2]{ogob20}.
Since most of the samples were collected on the 4th of April, we consider a timeframe from 3rd of April until 5th of April. 
On these three days a total of $N_t =  16\,226$ tests were conducted~\cite{OAGEG2021Dashboard}.
Of those, $778$ did yield a positive result~\cite{OAGEG2021Dashboard}.
Thus, the prevalence in this high prevalence subpopulation was $p_t = 0.048$.
The resulting untested population is $N_u =  8\,900\,619$ with a prevalence of $p_u = 0.0032$.
We consider the same separation into subpopulations as before as well as the same costs.
\begin{itemize}
\item The first subpopulation consists of individuals working in health care that belong to the high prevalence subpopulation with $p^{(1)}=0.048$ 
  and costs $b^{(1)}=6$, $c^{(1)} = 33$. 
  We assume that $N^{(1)} = 221$ belong to this subpopulation.
\item The second subpopulation consists of individuals working in health care that belong to the low prevalence subpopulation with $p^{(2)}=0.0032$ 
  and costs $b^{(2)}=6$, $c^{(2)} = 33$. 
  We assume that $N^{(2)} = 121\,346$ belong to this subpopulation.
\item The third subpopulation consists of individuals not working in health care that belong to the high prevalence subpopulation with $p^{(3)}=0.048$ 
  and costs $b^{(3)}=1$, $c^{(3)} = 33$. 
  We assume that $N^{(3)} = 16\,005$ belong to this subpopulation.
\item The fourth subpopulation consists of individuals not working in health care that belong to the low prevalence subpopulation with $p^{(4)}=0.0032$ 
  and costs $b^{(4)}=1$, $c^{(4)} = 33$. 
  We assume that $N^{(4)} = 8\,779\,273$ belong to this subpopulation.
\end{itemize}

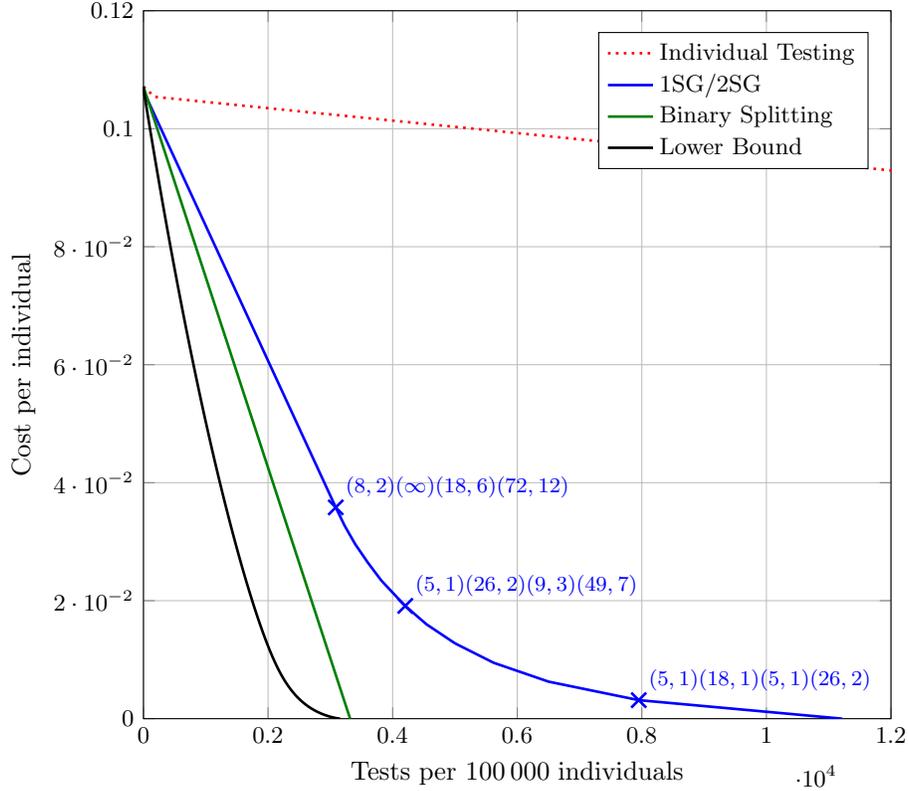
\begin{figure}[tbh]
  \centering
  \begin{tikzpicture}
    \begin{axis}[
      DRaxis,
      legend entries={Individual Testing, 1SG/2SG, Binary Splitting, Lower Bound},
            ymin=0.0,ymax=0.12,
      xmin=0.0, xmax=1.2e4,
      max space between ticks=45pt,
      ]
      \addplot+[DRplot, red, dotted] table[DRtable] {data/individual_p0.05_0.00_0.05_0.00.csv};
      \addplot+[DRplot, blue, solid] table [DRtable] {data/2SG_p0.05_0.00_0.05_0.00.csv};
      \addplot+[DRplot, green!50!black, solid] table [DRtable] {data/binsplit_p0.05_0.00_0.05_0.00.csv};
      \addplot+[DRplot, black] table[DRtable] {data/RD_p0.05_0.00_0.05_0.00.csv}; 
      \addplot+[DRlabeled, blue, only marks] table [DRtable_label] {data/2SG_p0.05_0.00_0.05_0.00_labels.csv};
    \end{axis}
  \end{tikzpicture}
  \caption{Testing of the Austrian population grouped into 4 different sub-populations in early April 2020.
      The lower bound from \cref{th:partitions} is compared to the binary splitting algorithm, and an optimal combination of strategies 1SG and 2SG as well as the significantly worse individual testing. Markers with associated numbers indicate the four testing strategies used to achieve the given point, where $\infty$ indicates that the given subpopulation is not tested, e.g., $(8, 2)(\infty)(18, 6)(72, 12)$ indicates that subpopulation $2$ is not tested and subpopulations $1$, $3$, and $4$ are tested using 2SG($8, 2$), 2SG($18, 6$), and 2SG($72, 12$), respectively.}
  \label{fig:subpop2}
\end{figure} 
Our lower bound and the performance of the optimal combinations of 1SG and 2SG strategies for these specific parameters are illustrated in \cref{fig:subpop2}.
For the $16\,226$ tests used during the discussed period, which correspond to $0.0018$ \ac{TpI}, the optimized strategy is to subject all individuals in subpopulation one to a test using a 2SG($8,2$) testing strategy,  all individuals in subpopulation three using a 2SG($18, 6$) testing strategy, and
as many individuals from the fourth subpopulation as possible using a 2SG($72,12$) testing strategy. 
Using this approach yields an expected cost of $0.1023$.
Individual testing could only reduce the expected cost to $0.1054$, while not testing at all results in an expected cost of $0.1072$.

\section{Proofs of information theoretic results}
\label{app:proofsit}

In this \lcnamecref{app:proofsit}, we provide detailed proofs of the two main information theoretic results in the main manuscript, namely \cref{thm:iid_rd_final} and \cref{th:subpop2}.

\subsection{Proof of Theorem~\ref{thm:iid_rd_final}}
\label{app:proofiid_rd_final}
 
The following variational description of the information distortion-rate function
 is a particularization of Corollary~4.2.1 in \citet{gray90source} to our setting of binary random variables with an asymmetric distortion function, substituting $v= 2^s$.
\begin{corollary}
  \label{cor:parrdfunc}
  Let $X$ be a Bernoulli($p$) random variable and $\rho\colon \Omega^2\to [0,\infty)$ a distortion function satisfying $\rho(0,0)=\rho(1,1)=0$, $\rho(0,1)=1$, and $\rho(1,0)=a$.
  Then for every $v \in [0,1]$ a point $(R_v, D_v)$ on the graph of the information distortion-rate function, i.e., $D_{\textrm{I}}^{(X, \rho)}(R_v) = D_v$, is given as
  \begin{equation}
    R_v = 
    D_v \log v +
    \min_{q\in [0,1]}
    \Big(- p \log \big(q +(1-q)v^{a}\big)
    - (1-p) \log \big(q v + (1-q)\big)
    \Big)  
    \label{eq:defrstd}
  \end{equation}
  and
  \begin{equation}
    D_v =
    \frac{ a p (1-q^*) \big(q^* v + (1- q^*)\big) v^{a} + q^*( 1- p) \big(q^* + (1- q^*) v^{a} \big) v}{\big(q^* +  (1 - q^*) v^{a}\big) \big(q^* v + (1- q^*)\big)}
    \label{eq:defdvprelim}
  \end{equation}
  where $q^*$ solves the minimization in \cref{eq:defrstd}.
\end{corollary}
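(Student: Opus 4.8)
The plan is to obtain \cref{cor:parrdfunc} as the specialization of the parametric (variational) description of the distortion-rate function --- Corollary~4.2.1 in \citet{gray90source} --- to a Bernoulli source with the asymmetric distortion at hand, with the slope parameter $s\le 0$ re-expressed through $v = 2^s$. Recall that the cited result sweeps out the graph of $D_{\textrm{I}}^{(X,\rho)}$ as follows: for each $s \le 0$ one selects the output distribution $q$ maximizing the auxiliary functional $\sum_x p_X(x)\log\sum_y q(y)2^{s\rho(x,y)}$; the resulting curve point has distortion $D_s$ equal to the expected distortion of the self-consistent test channel $W^*(y\mid x) \propto q^*(y)2^{s\rho(x,y)}$ built from the optimal $q^*$, and rate $R_s = sD_s$ minus that maximal functional value. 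Setting $v = 2^s$ turns $s\in(-\infty,0]$ into $v\in(0,1]$, and $v = 0$ is then included by taking the limit $s\to-\infty$.

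First I would collapse the optimization to a single scalar. On $\Omega=\{0,1\}$ the output law is $(q(0),q(1))$, so with $q := q(1)$ and $q(0) = 1-q$ the maximization runs over $q\in[0,1]$. Substituting $\rho(0,0)=\rho(1,1)=0$, $\rho(0,1)=1$, $\rho(1,0)=a$ and $v = 2^s$ gives $\sum_y q(y)2^{s\rho(0,y)} = (1-q)+qv$ and $\sum_y q(y)2^{s\rho(1,y)} = q+(1-q)v^a$, so, using $p_X(0)=1-p$ and $p_X(1)=p$, the auxiliary functional becomes $p\log\!\big(q+(1-q)v^a\big)+(1-p)\log\!\big(qv+(1-q)\big)$. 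This is concave in $q$ on the compact interval $[0,1]$ (each term is the logarithm of an affine, positive function of $q$), so the maximum is attained; rewriting ``maximize the functional'' as ``minimize its negative'' converts Gray's formula for $R_s$ into \cref{eq:defrstd}, with $q^*$ a minimizer there.

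Next I would read off $D_v$ from the associated self-consistent test channel $W^*$. Since only $\rho(0,1)=1$ and $\rho(1,0)=a$ are nonzero, $\sum_{x,y}p_X(x)W^*(y\mid x)\rho(x,y)$ equals $(1-p)\,\tfrac{q^*v}{(1-q^*)+q^*v} + p\,\tfrac{a(1-q^*)v^a}{q^*+(1-q^*)v^a}$; placing the two fractions over the common denominator $\big(q^*+(1-q^*)v^a\big)\big(q^*v+(1-q^*)\big)$ and collecting the numerator produces \cref{eq:defdvprelim} verbatim.

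I expect the only real friction to be notational bookkeeping: aligning Gray's conventions with ours, pinning down the sign and normalization of the slope parameter so that $s\le 0 \leftrightarrow v\in(0,1]$ (with $v=1$ the zero-rate end of the curve and $v_0\le 1$ the value at which the rate first reaches $0$), and checking that the object Gray designates as the distortion along the parametrized curve coincides exactly with the rational function in \cref{eq:defdvprelim}. No genuinely new estimate is needed; the statement is essentially a transcription of the cited result into the notation of this paper, together with the elementary evaluation of the binary sums above.
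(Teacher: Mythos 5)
Your proposal is correct and follows exactly the route the paper takes: the paper itself presents this corollary as a direct particularization of Corollary~4.2.1 in Gray, substituting $v=2^s$, reducing the output distribution to the scalar $q$, and reading off the expected distortion of the self-consistent test channel, which is precisely the computation you spell out (and your binary evaluations of the auxiliary functional and of $D_v$ check out). The only difference is that you make the transcription explicit where the paper leaves it implicit.
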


The minimization in \cref{eq:defrstd} can easily be done exactly. 
Although one could use existing results, e.g., Theorem~4.2.3 in \citet{gray90source}, it is easier to derive the optimization directly than to particularize these general results.
\begin{lemma}
\label{lem:opimqs}
    For $v,p \in (0,1)$ and $a > 0$, the function $\phi\colon [0, 1] \to \mathbb{R}$ defined by
    \begin{equation}
        \phi(q)
        = - p \log \big(q +(1-q)v^{a}\big)
        - (1-p) \log \big(q v + (1-q)\big) 
        \label{eq:objminq}
    \end{equation}
    is convex and
    \begin{equation}
        \argmin_{q\in [0,1]} \phi(q) 
        = 
        \begin{cases}
            q_v & \textrm{ if } q_v \in (0,1) \\
            0 & \textrm{ if } q_v \leq 0 \\
            1 & \textrm{ if } q_v \geq 1\,,
        \end{cases}
        \label{eq:optqcases}
    \end{equation}
    where
    \begin{align}
       q_v := 
       \frac{ (1 - p) v^{a+1} +p - v^{a} }{ (1 -v) (1- v^{a})}\,.
       \label{eq:defqs}
    \end{align}
\end{lemma}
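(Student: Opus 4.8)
The plan is to prove convexity by a structural composition argument, then to locate the unique unconstrained stationary point of $\phi$ by differentiation, and finally to invoke convexity to identify the constrained minimiser as the clamped value in \cref{eq:optqcases}.

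First I would rewrite $\phi(q) = -p\log f(q) - (1-p)\log g(q)$ with the affine functions $f(q) = v^{a} + q(1-v^{a})$ and $g(q) = 1 - q(1-v)$. Since $v\in(0,1)$ and $a>0$ give $1-v^{a}\in(0,1)$ and $1-v\in(0,1)$, the map $f$ sends $[0,1]$ onto $[v^{a},1]$ and $g$ sends $[0,1]$ onto $[v,1]$; in particular both are strictly positive on an open interval containing $[0,1]$, so $\phi$ is $C^{\infty}$ there. Because $t\mapsto -\log t$ is strictly convex on $(0,\infty)$ and $f,g$ are non-constant affine maps, each of $-\log f$ and $-\log g$ is strictly convex, hence so is the nonnegative combination $\phi$. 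Strict convexity already yields a unique minimiser on $[0,1]$.

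To pin it down, I would compute (dropping the harmless positive factor $1/\ln 2$)
\[
  \phi'(q) \;\propto\; -\,p\,\frac{1-v^{a}}{v^{a}+q(1-v^{a})} \;+\; (1-p)\,\frac{1-v}{1-q(1-v)},
\]
set the right-hand side to zero, and clear the (positive) denominators to obtain the linear equation $(1-p)(1-v)\bigl(v^{a}+q(1-v^{a})\bigr) = p(1-v^{a})\bigl(1-q(1-v)\bigr)$. Collecting the $q$-terms, the coefficient of $q$ is $(1-v)(1-v^{a})\bigl[(1-p)+p\bigr] = (1-v)(1-v^{a})\neq 0$, so there is a unique solution, and simplifying the remaining numerator via $p(1-v^{a}) - (1-p)(1-v)v^{a} = p - v^{a} + (1-p)v^{a+1}$ gives precisely $q=q_v$ as in \cref{eq:defqs}. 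The only genuine labour here is this algebraic simplification of the numerator into the stated form; everything else is routine.

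Finally I would convert the stationary point into the clamped minimiser using convexity. Since $\phi$ is strictly convex and smooth near $[0,1]$, $\phi'$ is strictly increasing with its unique zero at $q_v$. Hence: if $q_v\in(0,1)$ then $q_v$ is the global minimiser on $[0,1]$; if $q_v\le 0$ then $\phi'(q)\ge\phi'(q_v)=0$ for every $q\in[0,1]$, so $\phi$ is nondecreasing and the minimiser is $0$; and if $q_v\ge 1$ then $\phi'(q)\le 0$ on $[0,1]$, so $\phi$ is nonincreasing and the minimiser is $1$. This is exactly \cref{eq:optqcases}. The subtlety to flag is that $q_v$ is only the \emph{unconstrained} critical point, and it is precisely the convexity established at the outset that makes the clamped formula correct; that step should therefore not be skipped.
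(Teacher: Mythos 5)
Your proposal is correct and follows essentially the same route as the paper: differentiate $\phi$, solve the linear stationarity equation to get $q_v$, and use convexity to justify clamping to the boundary when $q_v\notin(0,1)$. The only cosmetic difference is that you obtain (strict) convexity from the composition of $-\log$ with non-constant affine maps, whereas the paper computes the second derivative explicitly; both arguments are routine and equally valid.
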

    
\begin{proof}
    The derivative of $\phi$ is given as
    \begin{equation}
        \phi'(q)
        = - p \frac{1-v^{a}}{q +(1-q)v^{a}}
        - (1-p) \frac{v-1}{q v + (1-q)}\,.
        \label{eq:minqderiv}
    \end{equation}
    Thus, $\phi'(q)=0$ if and only if
    \begin{equation}
        0
        = - p (q(v-1)+1)(1-v^{a})
        - (1-p) (q(1-v^{a}) +v^{a})(v-1)
    \end{equation}
    which is equivalent to
    \begin{equation}
         p(1-v^{a})
        +(1-p) v^{a}(v-1)
        = q(1-v)(1-v^{a})
    \end{equation}
    and in turn 
    to $q=q_v$.
    The second derivative is given as
    \begin{equation}
        \phi''(q)
        =  p \frac{(1-v^{a})^2}{(q +(1-q)v^{a})^2}
        + (1-p) \frac{(v-1)^2}{(q v + (1-q))^2}
        \label{eq:minqderiv2}
    \end{equation}
    which is  positive and thus $\phi$ is convex.
    Thus, if the critical value $q_v$ is in $(0,1)$ then it is the global minimum. 
    Otherwise, $q_v \geq 1$ implies $\phi'(q)<0$ on $(0,1)$ and thus the global minimum is at $1$, and 
    $q_v \leq 0$ implies $\phi'(q)>0$ on $(0,1)$ and thus the global minimum is at $0$.
\end{proof}

To prove \cref{thm:iid_rd_final}, it remains to combine \cref{cor:parrdfunc,lem:opimqs} and to note that, by \cite[Theorem~4.2.1b]{gray90source}, the entire distortion-rate curve is parameterized by $v$.
  
    We first show that $q_v \in (0,1)$ (i.e., the first case in \cref{eq:optqcases}) if and only if $v \in (0, v_0)$.
    By the definition of $q_v$ in \cref{eq:defqs}, $q_v > 0$ is equivalent to
    \begin{equation}
       (1 - p) v^{a+1} +p - v^{a} > 0
    \end{equation}
    and $q_v < 1$ is equivalent to
    \begin{equation}
       (1 - p) v^{a+1} +p - v^{a} <  (1 -v) (1- v^{a})\,.
    \end{equation}
    These equations are, in turn, easily seen to be equivalent to 
    \begin{equation}
       p v^{-(a+1)} + (1 - p) - v^{-1} > 0
       \label{eq:cond1qv}
    \end{equation}
    and
    \begin{equation}
       p v^{a+1} + (1 - p) - v  > 0\,.
       \label{eq:cond2qv}
    \end{equation}
    The function  $f(u) := p u^{a+1} + (1 - p) - u$ has derivatives
    $f'(u) = (a+1) p u^{a} - 1$
    and $f''(u) = (a+1) a u^{a-1}$ and is thus convex for $u > 0$.
    Furthermore, $f(1)=0$, $f(0) = 1-p > 0$, and $\lim_{u\to \infty}f(u) = \infty$ and hence the function has either one more zero or is nonnegative everywhere.
    In either case, \cref{eq:cond1qv,eq:cond2qv} are satisfied if and only if $v \in (0,v_0)$.
    Thus, the assumptions of \cref{thm:iid_rd_final} imply that the minimum in \cref{eq:defrstd} is $q_v$ and we can insert it into $D_v$ and $R_v$ in \cref{cor:parrdfunc}.
    To this end, we first derive $1-q_v$ as
    \begin{align}
       1- q_v 
                   & = 
       \frac{p v^{a+1} + (1 - p) - v}{ (1 - v) (1 - v^{a})}.
    \end{align}
    Next, we note that
    \begin{align}
        q_v + (1 - q_v)v^a
                                                        & = 
        \frac{ p  (1 - v^{a+1}) }{1 - v}
    \end{align}
    and
    \begin{align}
        q_v v + (1 - q_v)
                                                        & = 
        \frac{ (1 - p)  (1 - v^{a+1}) }{1 - v^{a}}\,.
    \end{align}
    Inserting these relations into $D_v$ in \cref{eq:defdvprelim}, simple algebraic manipulations yield
     $D_v = \bar D(p, a, v)$.
                                                                                                                                                                                                                                                                                                Similarly, inserting the relations into \cref{eq:defrstd}, we obtain $R_v =  \bar R(p, a, v)$.
                                                                        
    On the other hand, consider the case $v \in [v_0, 1)$. Assuming $f(v_0) = 0$, we must have $f'(1) \ge 0$, which is equivalent to $1-p \le ap$. Furthermore, $v$ lies between two zeros of a convex function and is thus nonpositive, i.e., \cref{eq:cond2qv} is not satisfied.
    Thus, $q_v \ge 1$, which in turn yields $R_v = 0 = \bar R(p,a,v)$ and $D_v = 1-p = \bar D(p,a,v)$. 
    The case $f(v_0^{-1}) = 0$, i.e. $ap \le 1-p$, similarly yields $R_v = 0 = \bar R(p,a,v)$ and $D_v = ap = \bar D(p,a,v)$.
   
    Finally, observing that the case $v \in \{0,1\}$ follows by continuity concludes the proof of \cref{thm:iid_rd_final}.

\subsection{Proof of Theorem~\ref{th:subpop2}}
\label{app:proofsubpop2}

We first establish a general result for the joint information distortion rate function of several independent sources particularized to the setting of \cref{th:subpop2}. 
We will use the shorthand 
$D_{\textrm{I}}^n = D_{\textrm{I}}^{(X_n, \rho_n)}$ for the $n$-th individual and 
$D_{\textrm{I}}^{(i)} =  D_{\textrm{I}}^{(X^{(i)}, \rho^{(i)})}$, where $X^{(i)}, \rho^{(i)}$ are the random infection status and the cost function of an arbitrary individual in the $i$-th subpopulation.
\begin{lemma}
  \label{lem:dr_function}
  \begin{enumerate}
  The information distortion-rate function $D_{\textrm{I}}^{(\mathbf{X}, \rho)}(R)$ can be decomposed as 
    \begin{align}
      \label{eq:dr_decomp1}
      D_{\textrm{I}}^{(\mathbf{X}, \rho)}(R) &= \min_{\boldsymbol{\xi} \in [0,1]^N :\sum_n \xi_n = 1} \frac{1}{N} \sum_{n=1}^{N} D_{\textrm{I}}^n(\xi_n N R) \\
              &= \min_{\boldsymbol{\xi} \in [0,1]^{\mathcal I} : \sum_i \xi^{(i)}   N^{(i)} = 1} \frac{1}{N} \sum_{i \in \mathcal I} N^{(i)} D_{\textrm{I}}^{(i)}(\xi^{(i)} NR) . \label{eq:dr_decomp2}
    \end{align}
  \end{enumerate}
\end{lemma}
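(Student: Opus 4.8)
The plan is to single-letterize the vector optimization defining $D_{\textrm{I}}^{(\mathbf{X},\rho)}(R)$ in \cref{eq:DR_function} and then optimize the rate allocation, first per individual to obtain \cref{eq:dr_decomp1} and then per subpopulation to obtain \cref{eq:dr_decomp2}.

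The first and main step is to argue that product channels $p_{\mathbf{Y}|\mathbf{X}} = \prod_{n=1}^N p_{Y_n|X_n}$ are optimal in \cref{eq:DR_function}. Given any test channel $p_{\mathbf{Y}|\mathbf{X}}$, replace it by the product $\tilde p_{\mathbf{Y}|\mathbf{X}}$ of its marginals. Since $\rho$ is additive, $\mathbb{E}\big[\tfrac1N\rho(\mathbf{X},\mathbf{Y})\big] = \tfrac1N\sum_n\mathbb{E}[\rho_n(X_n,Y_n)]$ depends only on the marginals and is therefore unchanged. For the rate, independence of the $X_n$ gives $H(\mathbf{X}) = \sum_n H(X_n)$, while subadditivity of entropy together with the fact that conditioning cannot increase entropy gives $H(\mathbf{X}\mid\mathbf{Y}) \le \sum_n H(X_n\mid Y_n)$; hence $I(\mathbf{X};\mathbf{Y}) \ge \sum_n I(X_n;Y_n) = I_{\tilde p}(\mathbf{X};\mathbf{Y})$, so $\tilde p_{\mathbf{Y}|\mathbf{X}}$ still satisfies the rate constraint. (All minima in this proof are attained, since the relevant sets of conditional distributions are compact and $I(\,\cdot\,;\,\cdot\,)$ and the expected distortion are continuous.)

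Restricting to product channels and writing $R_n := I(X_n;Y_n)$, the distortion decouples and each term is bounded below by $D_{\textrm{I}}^n(R_n)$, with equality attainable componentwise; thus $D_{\textrm{I}}^{(\mathbf{X},\rho)}(R)$ equals the minimum of $\tfrac1N\sum_n D_{\textrm{I}}^n(R_n)$ over $R_n\ge 0$ with $\sum_n R_n \le NR$. Because every $D_{\textrm{I}}^n$ is non-increasing, the constraint may be tightened to $\sum_n R_n = NR$; substituting $R_n = \xi_n NR$ with $\xi_n\ge 0$ and $\sum_n\xi_n = 1$ (so that $\xi_n\le 1$ automatically) yields \cref{eq:dr_decomp1}. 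Finally, grouping the indices by subpopulation and using that $D_{\textrm{I}}^n = D_{\textrm{I}}^{(i)}$ for $n$ in subpopulation $i$, together with convexity of the distortion-rate function, Jensen's inequality shows that an optimal allocation assigns the common rate $\xi^{(i)}NR$ to every member of subpopulation $i$, with $\sum_i N^{(i)}\xi^{(i)} = 1$; this turns \cref{eq:dr_decomp1} into \cref{eq:dr_decomp2}.

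The main obstacle is the first step — establishing optimality of product channels — which rests on the superadditivity $I(\mathbf{X};\mathbf{Y}) \ge \sum_n I(X_n;Y_n)$ for independent $X_n$; once the problem is single-letterized, only monotonicity and convexity of the $D_{\textrm{I}}^n$ are needed.
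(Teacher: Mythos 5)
Your proposal is correct and takes essentially the same route as the paper: single-letterization of \cref{eq:DR_function} via the superadditivity $I(\mathbf{X};\mathbf{Y}) \ge \sum_{n} I(X_n;Y_n)$ for independent $X_n$ together with additivity of the distortion, then a per-individual rate allocation giving \cref{eq:dr_decomp1}, and finally convexity of the distortion-rate function with Jensen's inequality to equalize rates within each subpopulation for \cref{eq:dr_decomp2}. The only cosmetic differences are that you obtain the mutual-information bound from subadditivity of conditional entropy rather than the paper's chain-rule expansion, and that you explicitly note attainment of the minima, which the paper leaves implicit.
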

\begin{proof}
  In \cref{eq:DR_function}, the right-hand side can be rewritten as 
  \begin{align}
      \min_{p_{\mathbf{Y}|\mathbf{X}} : I(\mathbf{X}; \mathbf{Y}) \le NR} \mathbb E_{p_{\mathbf{X}} p_{\mathbf{Y}|\mathbf{X}}} \bigg[ \frac{1}{N} \rho(\mathbf{X}, \mathbf{Y}) \bigg]
      & = \min_{p_{\mathbf{Y}|\mathbf{X}} : I(\mathbf{X}; \mathbf{Y}) \le NR} \mathbb E_{p_{\mathbf{X}} p_{\mathbf{Y}|\mathbf{X}}} \bigg[ \frac{1}{N} \sum_{n=1}^{N} \rho_n(X_n, Y_n) \bigg]
      \notag \\
      & = \min_{p_{\mathbf{Y}|\mathbf{X}} : I(\mathbf{X}; \mathbf{Y}) \le NR} 
      \frac{1}{N} \sum_{n=1}^{N}
      \mathbb E_{p_{\mathbf{X}} p_{\mathbf{Y}|\mathbf{X}}} \big[ \rho_n(X_n, Y_n) \big]
      \notag \\
      & = \min_{p_{\mathbf{Y}|\mathbf{X}} : I(\mathbf{X}; \mathbf{Y}) \le NR} 
        \frac{1}{N} \sum_{n=1}^{N}
        \mathbb E_{p_{X_n} p_{Y_n|X_n}} \big[ \rho_n(X_n, Y_n) \big]\,.
        \label{eq:proofseprdfunc}
  \end{align}
  Furthermore, we can expand the mutual information as
  \begin{align}
    I(\mathbf{X}; \mathbf{Y})
    & = H(\mathbf{X}) - H(\mathbf{X}\,\vert\,\mathbf{Y})
      \notag \\
    & = \sum_{n=1}^N H(X_n) - H(X_n\,\vert\,\mathbf{Y}, X_1, \dots, X_{n-1})
      \notag \\
    & \geq \sum_{n=1}^N H(X_n) - H(X_n\,\vert\,Y_n)
      \notag \\
    & = \sum_{n=1}^N I(X_n;Y_n)
  \end{align}
  with equality if $p_{\mathbf{Y}|\mathbf{X}}(\mathbf{y}|\mathbf{x}) = \prod_n p_{Y_n|X_n}(y_n| x_n)$.
  We can thus restrict the minimization in \cref{eq:proofseprdfunc} to 
  $p_{\mathbf{Y}|\mathbf{X}}(\mathbf{y}|\mathbf{x}) = \prod_n p_{Y_n|X_n}(y_n| x_n)$ and obtain
  \begin{align}
    D_{\textrm{I}}^{(\mathbf X, \rho)}(R) &= \frac{1}{N} \min_{p_{\mathbf{Y}|\mathbf{X}} : \sum_n I(X_n; Y_n) \le NR} \sum_{n=1}^N \mathbb E_{p_{X_n} p_{Y_n|X_n}} \big[ \rho_n(X_n, Y_n) \big] \\
                                          &= \frac{1}{N} \min_{\xi \in [0,1]^N : \Vert v \Vert_1 = 1} \;\min_{p_{Y_n|X_n} : I(X_n; Y_n) \le \xi_n NR} \sum_{n=1}^N \mathbb E_{p_{X_n} p_{Y_n|X_n}} \big[ \rho_n(X_n, Y_n) \big] ,
  \end{align}
  which immediately yields \cref{eq:dr_decomp1}.
  
  Note that the only difference between \cref{eq:dr_decomp1,eq:dr_decomp2} is that all $\xi_n$ for $X_n$ belonging to the same subpopulation are chosen to be equal, i.e.,  $\xi_n = \xi^{(i)}$.
  To justify this choice, we have to show that this indeed minimizes the sum, i.e.,
  \begin{equation}
    N^{(i)} D_{\textrm{I}}^{(i)}(\xi^{(i)} NR) %
    \leq \sum_{n=1}^{N^{(i)}} D_{\textrm{I}}^{(i)}(\xi_n NR)
    \label{eq:convexsubpop}
  \end{equation}
  for $\sum_{n=1}^{N^{(i)}} \xi_n = N^{(i)} \xi^{(i)}$.
  The convexity of the information distortion-rate function \cite[Sec.~4.1]{gray90source} implies that
  \begin{equation}
    D_{\textrm{I}}^{(i)}\bigg(\sum_{n=1}^{N^{(i)}} \frac{1}{N^{(i)}} R_n\bigg)
    \leq \sum_{n=1}^{N^{(i)}} \frac{1}{N^{(i)}}  D_{\textrm{I}}^{(i)}(R_n)
  \end{equation}
  which is precisely \cref{eq:convexsubpop} for $R_n = \xi_n NR$.
\end{proof}

\Cref{th:subpop2} now follows by solving the minimization in \cref{eq:dr_decomp2}.
More specifically, we can write the optimization problem in \cref{eq:dr_decomp2} as an unconstrained optimization problem using a Lagrangian formalism \cite[Sec.~5.5.3]{Boyd2004Convex}
  \begin{align}
    \label{eq:2}
    \mathcal L(\boldsymbol{\xi}, \lambda, \boldsymbol{\mu}) = \frac{1}{N} \sum_{i \in \mathcal I} N^{(i)} D_{\textrm{I}}^{(i)}(\xi^{(i)} NR) + \lambda \left( 1 - \sum_i \xi^{(i)} N^{(i)} \right) - \sum_i \mu^{(i)} \xi^{(i)}
  \end{align}
  and obtain the associated Karush-Kuhn-Tucker (KKT) conditions for $\lambda \in \mathbb R$ and $\boldsymbol \mu \in \mathbb R^{\mathcal I}$,
  \begin{align}
    R N^{(i)} \dot D_{\textrm{I}}^{(i)}(\xi^{(i)} NR) - \lambda N^{(i)} - \mu^{(i)} &= 0  \qquad \text{ for } i \in \mathcal I \label{eq:kkt1}\\*
    1 - \sum_i \xi^{(i)} N^{(i)} &= 0 \label{eq:sum_constraint}\\*
    \xi^{(i)} \ge 0, \;  \mu^{(i)} \ge 0 , \;  \xi^{(i)} \mu^{(i)} &= 0 \qquad \text{ for } i \in \mathcal I . \label{eq:ximu0}
  \end{align}
  By the convexity of $D_{\mathrm{I}}^{(i)}$ \cite[Sec.~4.1]{gray90source}, the minimization problem \cref{eq:dr_decomp2} is convex and the KKT conditions therefore necessary and sufficient.
  
  Recall that we actually want to solve the minimization problem for a given $v$ at the fixed $R$
  \begin{align}
    R(v) = \frac{1}{N}\sum_i  N^{(i)} \bar R(p^{(i)}, a^{(i)}, v^{b^{(i)}})\,.
  \end{align}
  Choosing $\xi^{(i)} = \frac{\bar R(p^{(i)}, a^{(i)}, v^{b^{(i)}})}{NR}$ now obviously satisfies  \cref{eq:sum_constraint}.
  To check  \cref{eq:kkt1} and \cref{eq:ximu0} for all $i\in \mathcal I$, we consider two cases.
  If $v^{b^{(i)}} < v^{(i)}_0$, where $v^{(i)}_0$ is defined as $v_0$ in \cref{thm:RD_function}, then \cref{rmk:DR_deriv} implies that  
  $\dot D_{\textrm{I}}^{(i)}\big(\bar R(p^{(i)}, a^{(i)}, v^{b^{(i)}})\big) = \frac{b^{(i)}}{\log v^{b^{(i)}}}$.
  Hence, setting $\mu^{(i)}=0$, \cref{eq:ximu0} is satisfied and \cref{eq:kkt1} reduces to 
  choosing $\lambda = \frac{R}{\log v}$,
  such that \cref{eq:kkt1} is satisfied for all $i$ with $v^{b^{(i)}} < v^{(i)}_0$.
  
  If $v^{b^{(i)}}\geq v^{(i)}_0$, we obtain from the definition of $\bar R(p^{(i)}, a^{(i)}, v^{b^{(i)}})$ that $\xi^{(i)} = 0$.
  According to \cref{rmk:DR_deriv}, the derivative $\dot D_{\textrm{I}}^{(i)}(0) = \frac{b^{(i)}}{\log v_0^{(i)}}$ and  \cref{eq:kkt1} becomes $\mu^{(i)} = R N^{(i)} \frac{b^{(i)}}{\log v_0^{(i)}} - R N^{(i)} \frac{1}{\log v}$ which is nonnegative due to $v^{b^{(i)}}\geq v^{(i)}_0$ and thus  \cref{eq:ximu0} is satisfied.
  
  Thus, the choices $\xi^{(i)} = \frac{\bar R(p^{(i)}, a^{(i)}, v^{b^{(i)}})}{NR}$, $\lambda = \frac{R}{\log v}$, and $\mu^{(i)} = R N^{(i)} \frac{b^{(i)}}{\log v_0^{(i)}} - R N^{(i)} \frac{1}{\log v}$ for $v^{b^{(i)}}\geq v^{(i)}_0$ and $\mu^{(i)}=0$ otherwise, satisfy the KKT conditions. 
  The equations \cref{eq:subpopDR} then follow by inserting $\xi^{(i)} NR = \bar R(p^{(i)}, a^{(i)}, v^{b^{(i)}})$ into $D_{\textrm{I}}^{(i)}(\xi^{(i)} NR)$ and noting that, by \cref{thm:iid_rd_final}, we have $D_{\textrm{I}}^{(i)}(\bar R(p^{(i)}, a^{(i)}, v^{b^{(i)}})) = b^{(i)} \bar D(p^{(i)}, a^{(i)}, v^{b^{(i)}})$.

\section{Evaluation of $k$ stage group testing}
\label{app:proofksg}

To calculate the expected rate, we denote by $T_{\ell}$ the \ac{TpI} used in the $\ell$-th stage of the testing procedure.
Then 
\begin{equation}
    R_{\textrm{$k$SG($u_1, \dots, u_k$)}}
    = \mathbb{E}\bigg[\sum_{\ell =1}^{k} T_{\ell} \bigg]
    = \sum_{\ell =1}^{k} \mathbb{E}[ T_{\ell} ]
    .
    \end{equation}
Since in the first stage all individuals are tested in groups of size $u_1$, the expectation is simply
$
    \mathbb{E}[ T_{1} ]
    = \frac{1}{u_1}.
$
In the $(\ell+1)$-th stage for $\ell\geq 1$ only those individuals that belong to groups of size $u_{\ell}$ including at least one positive individual are tested. 
We can thus condition the expectation on the event that the group  of size $u_{\ell}$ the individual belongs to is negative, which has probability $(1-p)^{u_\ell}$, or positive, which has probability $1-(1-p)^{u_\ell}$.
Hence, the expectation expands to
\begin{equation}
    \mathbb{E}[ T_{\ell+1} ]
    = \big[ 1-(1-p)^{u_\ell} \big] \frac{1}{u_{\ell+1}} + (1-p)^{u_\ell} \cdot 0
\end{equation}
concluding the proof of \cref{eq:rateksg}.

To prove \cref{eq:distksg},
first note that $k$SG($u_1, \dots, u_k$) never declares an infected individual as healthy. 
Indeed, the only cases of wrong assignment can happen in groups that are positive at the $k$-th stage. 
Here, all individuals in the group are declared infected, although there may be several healthy individuals in the group. 
Thus, the expected cost of a single individual is the probability of itself being negative times the probability that at least one of $u_k-1$ other individuals is infected times the false positive cost $b$, i.e., 
\begin{align}
    D_{\textrm{$k$SG($u_1, \dots, u_k$)}} 
    & = b (1-p) (1-(1-p)^{u_k-1}) 
    \notag \\
    & = b (1 - p - (1-p)^{u_k})\,.
\end{align}

\end{appendix}

\section*{Acknowledgments}
We thank the anonymous reviewers for their valuable feedback and comments.

This work was supported in part by the  Vienna Science and Technology Fund (WWTF) under grant  MA16-053 and the Austrian Science Fund under grant Y 1199.
{\it Conflict of Interest}: None declared.

\bibliographystyle{elsarticle-num-names}
\bibliography{covid}

\end{document}